\documentclass[10pt,a4paper]{article} 

\usepackage[margin=1in]{geometry}
\newtheorem{lemma}{Lemma}
\newtheorem{theorem}{Theorem}{}
\newtheorem{corollary}{Corollary}{}
{}
\usepackage{url}
\usepackage{enumerate}
\usepackage{graphicx}
\usepackage{braket}
\usepackage{todonotes}
\usepackage{amsmath}
\usepackage{hyperref}
\usepackage[english]{babel}
\usepackage{mathrsfs}
\usepackage{amsfonts}
\usepackage{mathtools}
\usepackage{comment}
\usepackage{longtable}
\usepackage{multicol}
\usepackage{multirow}

\usepackage{wrapfig}
\usepackage{lscape}
\usepackage{rotating}
\usepackage{makecell}
\usepackage{authblk}
\usepackage{subcaption}
\usepackage{algorithm,algorithmic}

\title{Efficient Syndrome Decoder for Heavy Hexagonal QECC\\ via Machine Learning}

\author[1,*]{Debasmita Bhoumik}
\author[2]{Ritajit Majumdar}
\author[2]{Dhiraj Madan}
\author[2]{Dhinakaran Vinayagamurthy}
\author[2]{Shesha Raghunathan}
\author[1,+]{Susmita Sur-Kolay}

\affil[1]{Advanced Computing \& Microelectronics Unit, Indian Statistical Institute, India}
\affil[2]{\textit{IBM Quantum}, IBM India Research Lab}

\affil[*]{debasmita.ria21@gmail.com}
\affil[+]{ssk@isical.ac.in}

\date{}
\begin{document}
\maketitle

\begin{abstract}
Error syndromes for heavy hexagonal code and other topological codes such as surface code have typically been decoded by using Minimum Weight Perfect Matching (MWPM) based methods. Recent advances have shown that topological codes can be efficiently decoded by deploying machine learning (ML) techniques, in particular with neural networks. In this work, we first propose an ML based decoder for heavy hexagonal code and establish its efficiency in terms of the values of threshold and  pseudo-threshold, for various noise models. We show that the proposed ML based decoding method achieves $\sim5 \times$ higher values of threshold than that for MWPM.  
    Next, exploiting the property of subsystem codes, we define gauge equivalence for heavy hexagonal code, by which two distinct errors can belong to the same error class.  A linear search based method is proposed for determining the equivalent error classes. This provides a quadratic reduction in the number of error classes to be considered for both bit flip and phase flip errors, and thus a further improvement of $\sim 14\%$ in the threshold over the basic ML decoder. Lastly, a novel technique based on rank to determine the equivalent error classes is  presented, which is empirically faster than the one based on linear search.
\end{abstract}
\providecommand{\keywords}[1]{\textbf{\textit{Index terms---}} #1}
\keywords{QECC syndrome, topological code, subsystem code, heavy hexagonal code, gauge equivalence, neural networks }



 \section{ Introduction}

Quantum computers excel over their classical counterparts \cite{Shor:1997:PAP:264393.264406, Grover:1996:FQM:237814.237866, arute2019quantum, montanaro2015quantum} for certain computational problems by attaining substantial speedup, due to the quantum mechanical properties of superposition and entanglement. But quantum states are highly fragile and a slightest unwanted rotation that may occur due to an interaction with the environment can introduce errors in the computation. An $[[n,k,d]]$ quantum error correcting code (QECC) encodes $k > 1$ physical qubits into $n > k$ physical qubits for correcting at most $t = \lfloor \frac{d-1}{2} \rfloor$ errors. Some of the earlier QECCs \cite{PhysRevA.52.R2493,PhysRevLett.77.793,PhysRevLett.77.198} are however burdened with the Nearest Neighbour problem \cite{fowler2012towards, bhoumik2022ml} arising in the quantum hardware to realize a logical qubit. Two physical qubits $i$ and $j$ are said to be nearest neighbours if a 2-qubit operation involving $(i,j)$  is feasible. An operation involving two non-neighbour physical qubits is costly since it requires multiple qubit swap operations, thereby reducing the computational speed and in turn making the system more error-prone. Topological QECCs \cite{dennis2002topological,fowler2009high,wang2011surface,wootton2012high} resolve this problem by making the physical qubits for a logical qubit interact only with their neighbours.

Recently, industry research labs have been shifting towards the hexagonal architecture for their quantum computers.  This architecture has the advantage of reducing the number of distinct frequencies, and thus crosstalk. The surface code  \cite{bravyi1998quantum} structure has been modified to a topological code with a heavy hexagonal structure \cite{chamberland2020topological} in order to become more suitable for these architectures. The heavy hexagonal code \cite{chamberland2020topological} uses a combination of degree-two and degree-three vertices in the topology, and can be considered as a hybrid of a surface code and a Bacon-Shor code \cite{bacon2006operator}. This QECC reduces the distinct number of frequencies required in their realization by introducing more ancilla qubits (termed as flag qubits) for entanglement in the syndrome measurement 
\cite{chamberland2020topological}.

For a distance $d$ QECC, if more than $\lfloor \frac{d-1}{2} \rfloor$ errors occur, then the QECC fails to correct those errors, leading to an incorrect logical state called a  \emph{logical error}. A logical error can occur due to incorrect decoding as well. Logical errors pose a serious threat towards building error-corrected qubits since these remain undetected, and are retained in the logical state of the system. Given a QECC, the goal therefore is to design a decoder which reduces the probability of logical error. The performance of a decoder for a QECC is usually quantified by  the following two probabilities.

\begin{figure}[htb]
    \centering
    \includegraphics[scale = 0.4] {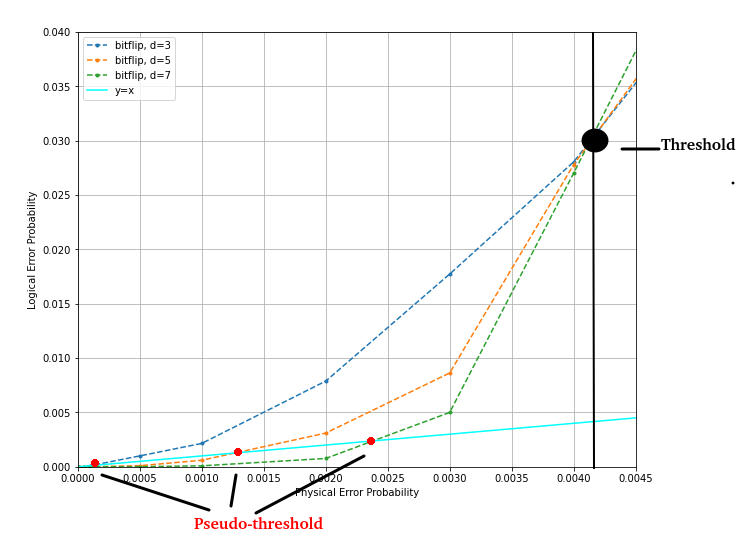}
    \caption{Threshold (black dot) and Pseudo-threshold (red dot) of a MWPM decoder for Heavy Hexagonal  QECC, with performance comparison for distance 3 (\color{blue}blue\color{black}), distance 5 (\color{orange}orange\color{black}), and distance 7 (\color{green}green\color{black}). The \color{cyan} cyan \color{black} straight line $y=x$ is for equal probabilities of physical qubit error and logical error.}
    \label{fig:metric}
\end{figure}

   \emph{Pseudo-threshold} for a QECC is the probability of physical error, below which error correction can effectively reduce the probability of logical error.\\
   
\emph{Threshold} for a QECC is the probability of physical error above which the probability of logical error increases with the distance of the QECC. \\

We have illustrated these two probabilities for heavy hexagonal QECC in Fig. \ref{fig:metric}. A decoder with higher pseudo-threshold and threshold is desirable.
 
\vspace{0.3cm}

\emph{Related works}:\\
The authors of \cite{chamberland2020topological} have proposed the heavy hexagonal code for QECC  and have used an MWPM \cite{edmonds_1965, chamberland2020topological} decoder to evaluate the code. The asymptotic threshold for logical bit flip or $X$ errors is 0.0045. Since phase flip or $Z$ errors are corrected using Bacon-Shor type stabilizers, no threshold for $Z$ errors can be defined as such. To the best of our knowledge, there is no ML based decoder for the heavy hexagonal code, so the threshold 0.0045 is considered the state-of-the-art.

Two widely studied decoders in the literature are the Union Find (UF) \cite{tarjan1975efficiency, wu2022interpretation, delfosse2021almost} and Minimum-weight perfect matching (MWPM) \cite{edmonds_1965, fowler2012towards, chamberland2020topological} decoders. For a graph $G = (V,E)$, Union Find is known to have an almost linear ($\mathcal{O}(|V|)$) decoding time, whereas the decoding time of MWPM scales polynomially as ($\mathcal{O}(|V|^4)$). However, Union Find is shown to have a poorer performance (i.e., lower threshold and pseudo-threshold) than MWPM. Without considering the correlation between the two types of errors \cite{varsamopoulos2017decoding, bhoumik2022ml}, MWPM in turn always aims to find the minimum number of errors ($X$ and $Z$ separately) that can generate an observed syndrome. In \cite{varsamopoulos2017decoding}, the authors established that for surface
code upto distance 7 their feed-forward neural network based decoder   outperforms the MWPM counterpart in the case of depolarizing noise model \cite{nielsen2002quantum}. The authors of  \cite{torlai2017neural} have shown that a Boltzmann machine based decoder for phase flip errors exhibits a logical failure probability that is close to MWPM, but not identical. 

In \cite{bhoumik2022ml}, the authors have proposed a two-level (low and high) ML-based decoding scheme, where the first (low) level corrects errors on the physical qubits and the second (high) level corrects any existing logical errors, for various noise models for surface code. In \cite{azad2022surface}, the authors introduced a [$dx$, $dz$] rectangular surface code design, where $dx$ ($dz$) represents the distance of the code for bit flip (phase flip) error correction, motivated by the fact that the severity of bit flip and phase flip errors in the physical quantum system is asymmetric. They report the values of 
pseudo-threshold and threshold for the proposed surface code design in asymmetric error channels with various degrees of asymmetry of errors in a depolarisation channel.

Machine learning (ML) has been shown to outperform MWPM in terms of time taken and the error thresholds for decoding QECCs such as surface code in various other papers also \cite{chamberland2018deep, krastanov2017deep, sweke2018reinforcement}.

The decoding time of an ML based decoder scales polyonimially ($\mathcal{O}(q^2)$) with the number of qubits $q$ \cite{varsamopoulos2019comparing}. For MWPM and UF decoders, we have  $|V| = \mathcal{O}(q^2)$. Therefore, the decoding via ML decoder is faster than MWPM. Although it is slower than UF, we show in Table~\ref{tab:result1MWPMvsML} that our ML decoder outperforms it significantly in terms of pseudo-threshold and threshold. On the other hand, unlike UF and MWPM, our ML decoder can be trained to take the error probability of the system under consideration during decoding. The ML decoders are trained to map the syndrome bit strings to Pauli error strings. However it is challenging to extend  this ML approach directly to a subsystem code or gauge code such as the heavy hexagonal code.\\

\emph{Our contributions in this article}:\\
This work initiates the study of using ML for decoding heavy hexagonal code. The heavy hexagonal code is a hybrid of surface code and Bacon-Shor code, where the later is a subsystem code \cite{bacon2006operator}. Being a subsystem code, the entire codespace of the heavy hexagonal code is partitioned into equivalent classes (details given in Sec.~\ref{sectiongauge}). By this property, distinct errors can be clubbed into certain equivalent error classes, called gauge equivalence. We identify a unique representative element from each such class. This approach reduces the number of error classes, resulting in a classification problem with fewer classes, and thus the training of ML model becomes faster and more accurate.

For the depolarization noise model \cite{nielsen2002quantum}, we use the entire syndrome (i.e., the syndromes for both $X$ and $Z$ stabilizers) to train the ML decoder even for determining the probability of logical $X$ or $Z$ errors individually. We show by simulation that the na\"ive ML decoder itself achieves a threshold of 0.0137 for logical $X$ errors in bit flip noise, which is much higher than that for the MWPM decoder \cite{chamberland2020topological}. This is further improved to a threshold of 0.0158 using gauge equivalence. We also show that our ML decoder achieves a threshold of 0.0245 for logical $X$ errors in depolarizing noise model, which is better than that for the MWPM decoder \cite{chamberland2020topological}. Similar improvements are observed for phase flip errors as well. In this work we propose the application of classical machine learning for quantum  error decoding to facilitate error correction. Quantum machine learning has not been employed in the work presented here. 

The primary contributions of this article can be summarized as:
\begin{enumerate}
    \item  A new machine learning (ML) based decoder for heavy hexagonal code in bit flip, phase flip and depolarization noise models has been proposed, where the probability of error on each of the eleven steps of this QECC cycle is assumed to be equal \cite{chamberland2020topological}. Our ML  decoder achieves a higher threshold  than that of the MWPM based decoder \cite{chamberland2020topological, higgott2022pymatching}. 
    
    \item An equivalence relation among errors with respect to syndromes has been defined and two algorithms have been designed and implemented based on \emph{linear search} and \emph{rank} to compute a representative from each equivalence class of errors  in order to accelerate syndrome decoding. This  provides a quadratic reduction in the number of classes for both bit flip and phase flip errors. We obtain an empirical improvement in the performance of the ML decoder using this  concept of equivalence for errors.

\end{enumerate}

The rest of this article is organized as follows: Section 2 has the background of heavy hexagonal code structure and ML based decoders. In Section 3, we describe the design methodology of the ML based decoder for heavy hexagonal code. Section 4 first presents the notion of gauge operators \cite{chamberland2020topological}, and introduce their equivalence relations. Using this relation termed as gauge equivalence, we describe our methods for determining the error classes. In Section 5, we discuss the simulation results and  conclude in Section 6.

\section{Background}

Let us now describe the heavy hexagonal code structure, the noise model used in this work, and the motivation for using machine learning based decoder.

 \begin{figure}[htb]
    \centering
    \includegraphics[scale = 0.5] {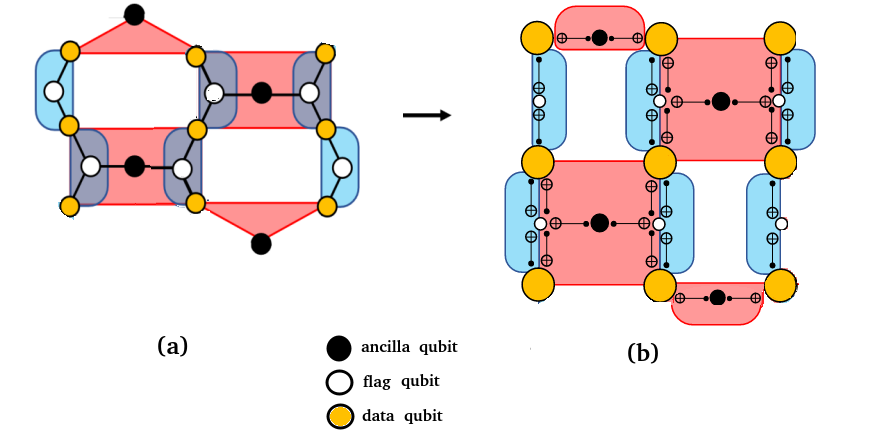}
    \caption{Distance 3 heavy hexagonal code encoding one logical qubit: (a) the hexagonal structure, (b) the circuit illustration of the heavy hexagonal code with the CNOT gates.
    Here yellow, white and black circles represents  data,
    flag and ancilla qubits respectively;  black ancilla qubits are for measuring the $X$ (red face or plaquette) and $Z$ (blue face or strip)  gauge generators. The product of two $Z$  gauge generators at each white plaquette forms a $Z$ stabilizer \cite{chamberland2020topological}. }
    \label{fig:d3hex}
\end{figure}

 \begin{figure}[htb]
    \centering
    \includegraphics[scale = 0.5] {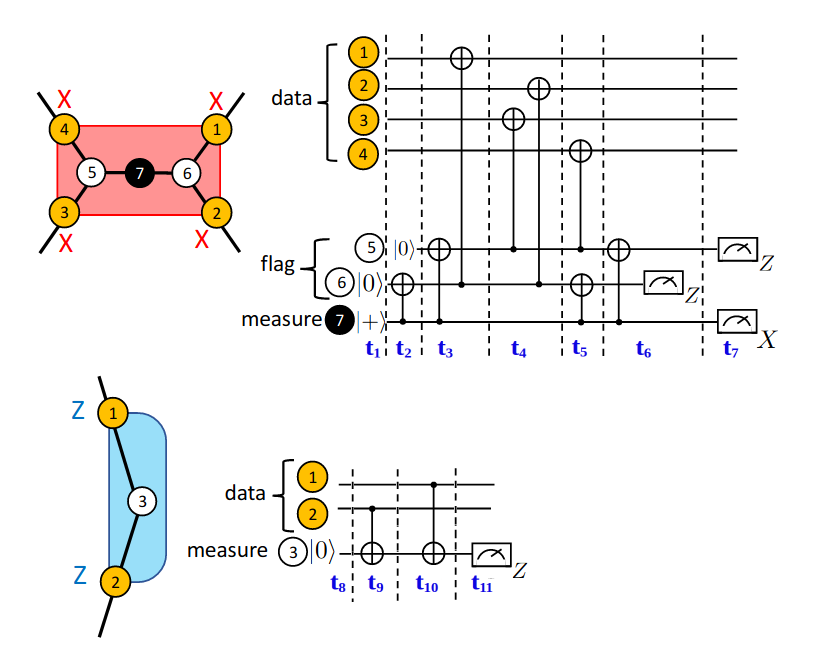}
    \caption{Circuits for measuring $X$ and $Z$ gauge generators in the heavy hexagonal code where $t_i$ denotes the $i^{th}$ time step. Two flag qubits (white circles) measure a $X$  gauge
generator having a weight of 4 and one flag qubit measures a $Z$ gauge
generator having a weight of 2 \cite{chamberland2020topological}.  }
    \label{fig:ckt}
\end{figure}

\subsection{Heavy Hexagonal Code}
This QECC encodes a logical qubit over 
a hexagonal lattice. As qubits are present on both the vertices and edges of the lattice,  the term heavy is used. This is a combination of degree-2 and degree-3 qubits hence there is a huge improvement in terms of average qubit degree in comparison with surface code structure which has qubits of degree-4 \cite{chamberland2020topological}. Fig. \ref{fig:d3hex} shows the lattice for a distance-3 heavy hexagonal code encoding one logical qubit. 

The heavy hexagonal code is a combination of surface code and subsystem code (Bacon Shor code) \cite{chamberland2020topological}.  A subsystem code is defined by $G$, a set of gauge operators where $\forall$ $g \in G$, $\ket{\psi} \equiv g\ket{\psi}$ \cite{bacon2006operator}. A gauge operator takes a codeword to an equivalent subsystem. In other words, a codespace in a subsystem code consists of multiple equivalent subsystems.  It is to be noted that the gauge operators are not necessarily commutative.  The product of two or more gauge operators forms a stabilizer, which keeps the codeword unchanged.

\subsubsection{Gauge generators}
For the heavy hexagonal code, its gauge generators which form the gauge group  are defined in terms of Pauli operators as 
$$ \langle Z_{i,j} Z_{i+1,j},~X_{i,j}X_{i,j+1}X_{i+1,j} X_{i+1,j+1},~
X_{1,2m-1} X_{1,2m},~ X_{d,2m} X_{d,2m+1} \rangle,$$
where $i \in\{1,2,\ldots , d-1\}$, $j \in\{1,2,\ldots , d\}$ and $m \in \{1,2,\ldots , (d-1)/2)\}$. Further, for the second type of the gauge generators $X_{i,j}X_{i,j+1}X_{i+1,j} X_{i+1,j+1}$, $(i+j)$ has to be odd.
A gauge generator $g_{i,j}$ is a gauge operator acting on the $(i,j)^{th}$ data qubit in the lattice. The product of one or more gauge generators can form a \emph{gauge operator}. 

Fig.~\ref{fig:ckt} shows  the $X$ and $Z$  gauge generators   along with the circuits \cite{chamberland2020topological} for measuring these. A single error correction cycle requires 11 time-steps (7 for $X$ and 4 for $Z$). The weight of a gauge operator $g$ is defined as the number of non-identity Pauli operators in  $g$.
In Fig.~\ref{fig:d5hexgauge}, the $Z$ and $X$ gauge generators, indicated in blue and red, can correct bit flip and phase flip errors  respectively. This is explained with detailed examples in Appendix.

For a distance $d$, the number of gauge generators  is $(d^2-1)/2$, so the number of possible gauge  operators can be exponential in $d$, thereby syndrome decoding poses a computational challenge.\\


 \begin{figure} [ht]
    \centering
    \includegraphics[scale = 0.4] {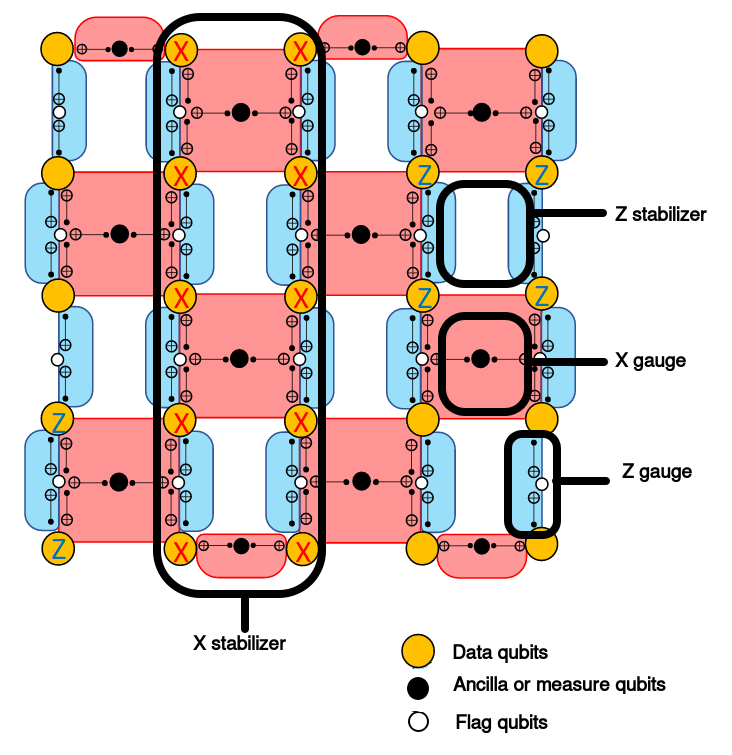}
    \caption{The stabilizers and gauge generators for a distance 5 heavy hexagonal code: the weight-4 $X$ gauge generators are in the red plaquettes, while the weight-2 $X$ gauge generators are on the upper and lower boundaries. Each blue strip denotes  weight-2 $Z$ gauge generators. A vertical strip of two adjacent columns with $X$  gauge generators form an $X$ stabilizer. The weight-4 $Z$ gauge generators in the white plaquettes, and weight-2 $Z$  gauge generators on the left and right boundaries are themselves $Z$ stabilizers. \cite{chamberland2020topological}. }
    \label{fig:d5hexgauge}
\end{figure}

\subsubsection{Stabilizers}

The stabilizer group of the heavy hexagonal code \cite{chamberland2020topological} is defined as 
$$\langle Z_{i,j} Z_{i,j+1} Z_{i+1,j} Z_{i+1,j+1}, ~Z_{2m-1,d} Z_{2m,d}, ~Z_{2m,1} Z_{2m+1,1}, ~\Pi_i X_{i,j} X_{i,j+1} \rangle$$ where $i \in\{1,2,\ldots , d-1\}$, $j \in\{1,2,\ldots , d\}$ and $m \in \{1,2,\ldots , (d-1)/2)\}$. Further, for the first type of stabilizers $Z_{i,j} Z_{i,j+1} Z_{i+1,j} Z_{i+1,j+1}$ having weight 4,  $(i+j)$ has to be even.  The measurement outcome of one such stabilizer is the product of the measured eigenvalues of the two weight-two gauge generators $Z_{i,j} Z_{i+1,j}$ and $Z_{i,j+1} Z_{i+1,j+1}$,  $i + j$ being even.



This is explained with detailed examples in the Appendix. For QECCs which are not subsystem codes, such as \cite{Shor:1997:PAP:264393.264406, PhysRevLett.77.793},   an $n$-qubit QECC with $n-k$ stabilizers can correct upto $k$ errors. On the other hand, subsystem codes need fewer stabilizers \cite{bacon2006operator}. Errors and the corresponding syndromes do not form a one-to-one mapping. 
This has motivated us to design an ML based syndrome decoder, along with two efficient algorithms to identify equivalence classes for all possible errors (details given in Section 4) to enhance the efficiency of the decoder.

\subsection{Noise Models}

The physical errors in the qubits can be of different types. In this subsection we discuss briefly the noise models that are used for this work.

\subsubsection{Bit flip error}
The action of a bit flip error \cite{nielsen2002quantum} on a quantum state $\rho$ is denoted as
$$\rho \rightarrow (1-p_x)\rho + p_x X \rho X^{\dagger},$$
$p_x$ being the probability that an unwanted Pauli $X$ error occurs.
\subsubsection{Phase flip error}

The evolution of the state in a phase flip \cite{nielsen2002quantum} is given as
$$\rho \rightarrow (1-p_z)\rho + p_z Z \rho Z^{\dagger}$$
where $p_z$ is the probability that an unwanted Pauli $Z$ error occurs.

\subsubsection{Depolarization noise}
The evolution of a quantum state $\rho$ under  depolarization noise \cite{nielsen2002quantum} is given as

$$\rho \rightarrow (1-p)\rho + \frac{p}{3} X \rho X^{\dagger} + \frac{p}{3} Y \rho Y^{\dagger} + \frac{p}{3} Z \rho Z^{\dagger}$$

where $p$ is the probability of error. It is basically a depolarizing channel with error rate $p$ \cite{li20192d}.

\subsubsection{Measurement error} After measuring the ancilla, we get the syndrome. Depending on the measurement error probability, bit flip error is incorporated in that syndrome.

\subsubsection{ Stabilizer error} Stabilizer error is basically erroneous measure qubits. Hence bit flip errors are applied on ancilla qubits. 

Moreover, a single error correction cycle in heavy hexagon code consists of eleven steps (Fig. \ref{fig:ckt}. An error can occur on $0 \leq k \leq d^2$ data qubits in each of the eleven steps, $d$ being the distance of the code. Therefore, the overall probability of error for depolarizing error for each error correction cycle is $1 - (1-p)^{11}$.

\subsection{Machine Learning based decoding}
Minimum Weight Perfect Matching algorithm (MWPM)  has been used for decoding the heavy hexagonal code \cite{chamberland2020topological}. Machine learning based decoders have been shown to outperform MWPM for surface codes \cite{baireuther2018machine,varsamopoulos2017decoding,krastanov2017deep,varsamopoulos2019comparing, bhoumik2022ml}. In this article we have designed ML based decoders for the heavy hexagonal code, and have shown that the subsystem property of this code leads to even more efficient decoding than surface code. We elaborate on those properties in Sec.~\ref{sectiongauge}. 

\subsubsection{Motivation for using Machine Learning based syndrome decoder}
Topological codes such as surface code, heavy hexagonal code are degenerate because there exists errors $e_1 \neq e_2$ such that $e_1\ket{\psi} = e_2\ket{\psi}$, $\ket{\psi}$ being the codeword \cite{fowler2012towards}. This results in any decoder failing to distinguish between $e_1$ and $e_2$. But this failure does not always result in a logical error \cite{bhoumik2022ml}. A logical error \textit{usually}  occurs when a decoder fails to distinguish between $\lfloor \frac{d-1}{2} \rfloor$ and $\lceil \frac{d+1}{2} \rceil$ errors \cite{fowler2012towards}.

Classical algorithms such as Minimum Weight Perfect Matching (MWPM) \cite{edmonds_1965}, used for decoding topological error correcting codes \cite{fowler2012towards, chamberland2020topological} aim to determine the minimum number of errors which can recreate the obtained error syndrome without considering the probability of various errors. Therefore, such a decoder is largely vulnerable to inducing a logical error due to incorrect decoding. 

Machine learning techniques can be applied to overcome this drawback and also consider the probability of error in the system while decoding such that it can predict the best possible error correction accordingly \cite{baireuther2018machine,varsamopoulos2017decoding,krastanov2017deep,varsamopoulos2019comparing, bhoumik2022ml} with comparatively lower time complexity \cite{varsamopoulos2019comparing}. It learns the probability of error for each type of error and predicts the most likely one among  the $\lfloor \frac{d-1}{2} \rfloor$ and $\lceil \frac{d+1}{2} \rceil$  type of  errors.

\section{Designing ML based decoder for heavy hexagonal code}

Artificial neural networks (ANN) are brain-inspired techniques for replicating the procedure of how we humans learn and they are heavily used in machine learning. Neural networks consists of a single input and output layer along with a few hidden layers. The hidden layers transform the input into an intermediate form and the output layer finds patterns from its previous hidden layers. The time complexity to train a neural network with $m$ input nodes, one hidden layer with $h$ nodes and $L$ output nodes is $\mathcal{O}((m+L) \cdot h)$. In this paper we are using feed forward neural network for decoding heavy hexagonal code of distance 3, 5, and 7 .

For application of ML in decoding, we first reduce the decoding problem to classification, a well-studied problem in machine learning. Classification is the process of predicting the class of given data points. Classes are also known as labels. It is the task of approximating a mapping function $f$ from input variables $x$ to output variables $y$. The methodology to map heavy hexagonal code is similar with that for surface code which has already been discussed in detail in \cite{bhoumik2022ml}.

A distance $d$ heavy hexagonal code has  (i) $(d^2 -1)/2$ syndrome bits in case of bit flip error (refer to $Z$ stabilizers in Fig. \ref{fig:d5hexgauge});    (ii) $d-1$  syndrome bits in case of phase flip error (refer to  $X$ stabilizers in Fig. \ref{fig:d5hexgauge}). This syndrome is the input data to the ML model. The label of the ML model is the erroneous data qubit of the heavy hexagonal code structure, hence it has $d^2$ qubits in each entry.  Our syndrome decoding problem is mapped into a \emph{multi-class multi-label  classification} problem in which there are ${2^{d^2}}$ classes, and each class label consists of $d^2$ bits. In our feed forward neural network, the input layer gets the syndrome (measured ancilla qubits) and the output layer identifies the type of error and its location in the lattice.

\section{Methods to reduce error classes for heavy hexagonal code}

\label{sectiongauge}


The subsystem property of a QECC asserts that if $\Pi_j g_j$ denotes the product of one or more gauge operators, then
\begin{itemize}
    \item  an error $e = \Pi_j g_j$ can be safely ignored as the system transforms it to an equivalent subsystem;
    \item if for two errors $e_1$ and $e_2$, $e_1 \Pi_j g_j = e_2$, then $e_2$ can be considered as $e_1$ as both take the state to equivalent erroneous subsystems.
\end{itemize}

We term the second scenario  as \emph{gauge equivalence}, which provides significant advantage in designing ML based decoders. For both $\sigma_x$ (bit flip) and $\sigma_z$ (phase flip) errors, a distance $d$ heavy hexagonal code having $d^2$ qubits mandates a classification of the $2^{d^2}$ possible errors, each being  termed as an \textit{error class} henceforth. However, for a subsystem code such as the heavy hexagonal code, there exists $i \neq j$ such that $Q_i$ is gauge equivalent to $Q_j$. From now on, if $\ket{\psi} \equiv \Pi_j g_j \ket{\phi}$, where $\Pi_j g_j$ denotes the product of one or more gauge operators $g_j$, then we write $\ket{\psi} \equiv \ket{\phi}$ modulo $(\Pi_j g_j)$.

For example, in Fig \ref{fig:xgauge} there are 4 $X$-gauge generators $G1, G2, G3, G4$, 6 $Z$-gauge generators $g1, g2, g3, g4, g5, g6$ and the qubits are $Q1, Q2, ..., Q9$. If $\ket{\psi}$ is the codeword, then, 
\begin{itemize}
    \item $X_4 X_7 X_8 \ket{\psi} \equiv X_5 \ket{\psi}$ modulo ($G_3$), where $X_k$ denotes bit flip error on qubit $Q_k$.
    
    \item $Z_7 \ket{\psi} \equiv Z_1 \ket{\psi}$ modulo ($g_4 g_1$), where $Z_k$ denotes phase flip error on qubit $Q_k$.
   
\end{itemize}

 \begin{figure}
    \centering
    \includegraphics[height=10 cm, width= 9 cm] {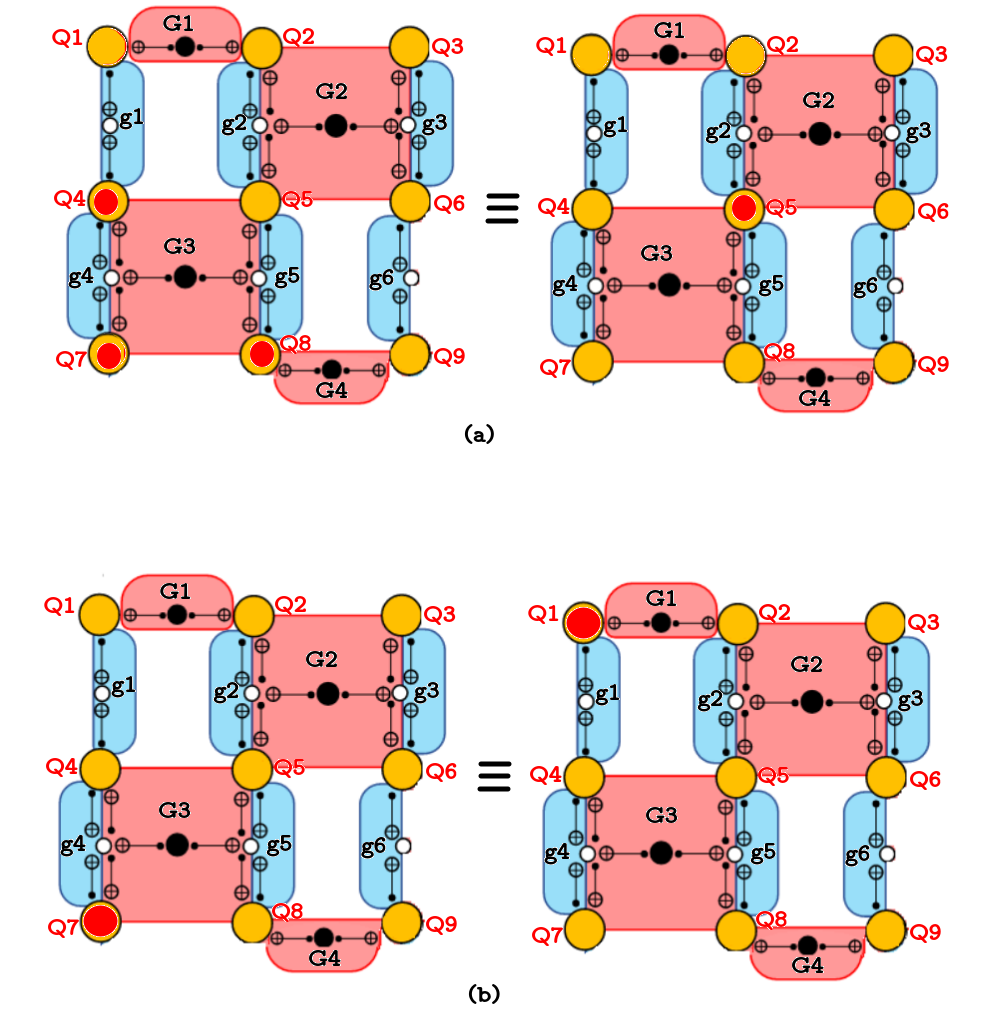}
    \caption {(a) $X$ gauge equivalence for bit flip error: simultaneous errors on data qubits 4, 7 and 8  is equivalent to an error on data qubit 5, because by applying $X$ gauge operator $G3$ (consisting of $X$ operators on data qubits 4, 5, 7 and 8) on data qubits 4, 7 and 8, data qubit 5 has an error.  (b) $Z$ gauge equivalence for phase flip errors: an error on data qubit 7 is equivalent to an error on data qubit 1 because by applying $Z$ Gauge operator $g4$ (consisting of $Z$ operators on qubits 4 and 7) followed by $Z$ Gauge operator $g1$ (consisting of $Z$ operators on qubits 1 and 4)  on data qubit 7, data qubit 1 has an error.}
    \label{fig:xgauge}
\end{figure}

The notion of gauge equivalent error strings creates a problem for machine learning based decoding since the problem of mapping syndromes to error strings is not well defined because there can be multiple Pauli error strings which are gauge equivalent  having the same syndrome. In order to remedy the above, we identify a representative element from each such an error class. Every error string in the training data is mapped to the representative element of its corresponding class.  
Next, can we reduce the number of error classes? Reduction in the number of error classes implies a classification problem with fewer classes, which helps in improving the performance further for the ML model. 
We now formally define the criteria for two qubits to belong to the same error class.

\begin{lemma}
\label{equiv}
Given a codeword $\ket{\psi}$ such that $\ket{\psi} \equiv \Pi_{j} g_j \ket{\psi}$, where $\Pi_{j} g_j$ implies the product of one or more gauge operators $g_j$, any error $e$ acting on the codeword is equivalent to $e (\Pi_{j} g_j)$.
\end{lemma}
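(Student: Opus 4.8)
The plan is to reduce the statement to the defining property of the subsystem equivalence relation, exploiting that every operator in sight is an element of the Pauli group. First I would set $g := \Pi_j g_j$ and observe that, since $G$ is a group and each generator $g_j \in G$, the product $g$ is again an element of $G$; this holds whether $g$ happens to be a genuine gauge operator or a stabilizer (a stabilizer fixes $\ket{\psi}$ outright, so the hypothesis $\ket{\psi} \equiv g\ket{\psi}$ holds trivially in that case). Thus the hypothesis is merely an instance of the defining relation $\ket{\phi} \equiv h\ket{\phi}$ for $h \in G$, and I would emphasize that this relation is induced by the \emph{action} of the gauge group, so it is meaningful for any vector, not only for codewords.

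Next I would pin down what ``$e$ is equivalent to $e g$'' means: the two errors are gauge equivalent exactly when the corrupted states $e\ket{\psi}$ and $(eg)\ket{\psi}$ lie in the same gauge orbit, i.e.\ differ by an element of $G$ up to a physically irrelevant global phase. So the goal reduces to proving $e\ket{\psi} \equiv (eg)\ket{\psi}$.

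The computational heart is a commutation argument. Because $e$ and $g$ are both Pauli operators they either commute or anticommute, so $eg = \pm\, ge$, giving
\[
(eg)\ket{\psi} = \pm\, g\bigl(e\ket{\psi}\bigr).
\]
Since $g \in G$, the defining relation applied to the state $e\ket{\psi}$ yields $e\ket{\psi} \equiv g\bigl(e\ket{\psi}\bigr)$, and the leading sign is a global phase that leaves the physical state unchanged, so it may be absorbed. Chaining these observations gives $e\ket{\psi} \equiv (eg)\ket{\psi} = e(\Pi_j g_j)\ket{\psi}$, which is precisely the claim.

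I expect the main obstacle to be conceptual rather than algebraic: making rigorous that the relation $\equiv$, originally stated for codewords via $\ket{\psi} \equiv g\ket{\psi}$, extends consistently to the non-code states $e\ket{\psi}$. I would address this by defining $\equiv$ on the whole Hilbert space through the gauge-group action, namely $\ket{\alpha} \equiv \ket{\beta}$ iff $\ket{\beta} = h\ket{\alpha}$ for some $h \in G$ up to phase, which is automatically an equivalence relation because $G$ is a group and which specializes to the codeword version. The only remaining delicate point is the $\pm$ sign from anticommutation; I would dispatch it either by the remark that a global phase does not affect the physical state (and hence the required correction), or equivalently by noting that $-I$ lies in the Pauli gauge group, so $-g$ and $g$ belong to the same coset and both certify the same equivalence.
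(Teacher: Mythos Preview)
Your proposal is correct and rests on the same idea as the paper's proof, which is literally a one-liner: apply $e$ to both sides of $\ket{\psi}\equiv(\Pi_j g_j)\ket{\psi}$ and read off $e\ket{\psi}\equiv e(\Pi_j g_j)\ket{\psi}$, hence $e\equiv e(\Pi_j g_j)$. Your commutation argument, the handling of the $\pm$ sign, and the extension of $\equiv$ to non-codeword states simply make rigorous the step the paper leaves implicit, so this is a more careful version of the same approach rather than a genuinely different route.
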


Proof : 
Consider an error $e$ acting on the codeword $\ket{\psi}$ such that $\ket{\psi} \equiv \Pi_{j} g_j \ket{\psi}$. Therefore,
\begin{eqnarray}
e\ket{\psi} &\equiv& e \Pi_{j} g_j \ket{\psi} \Rightarrow e \equiv e (\Pi_{j} g_j) \nonumber
\end{eqnarray}

Lemma~\ref{equiv} implies that $e$ and $e(\Pi_{j} g_j)$ as the same error class. Such equivalence leads to a set of error classes whose cardinality is less than $2^{d^2}$, where errors in the same class act equivalently on the codeword. Therefore, it is not necessary to distinguish between the errors in the same \textit{error class}. It suffices to identify the error class only.

 We propose two methods to find gauge equivalence and determine the error classes for the heavy hexagonal code, namely \emph{search based equivalence} and \emph{rank based equivalence}.

\subsection{Reducing bit flip error classes by search based gauge equivalence}

The heavy hexagonal code, being similar to surface code \cite{bravyi1998quantum} and Bacon-Shor code \cite{bacon2006operator} for bit flip and phase flip errors respectively \cite{chamberland2020topological}, we 
employ the subsystem code property  to modify our training dataset into another equivalent training dataset with fewer class labels.
In this subsection we present an algorithm to convert a given trining data set to an equivalent one  having fewer class labels for bit flip errors.



Before presenting the Algorithm, we prove in Lemma~\ref{lemma1} the number of error classes obtained due to the subsystem property. As multiple errors can fall under the same class, we require a class representative. We show a method to find the class representative, followed finally by the Algorithm to obtain such equivalence.

\begin{lemma}
\label{lemma1}
For a distance $d$ heavy hexagonal code, the total number of bit flip error classes  is $2^{\frac{d^2+1}{2}}$.
\end{lemma}

Proof :
For a distance $d$ heavy hexagonal code, the total number of data qubits is $d^2$ and the number of $X$ gauge generators is $\frac{d^2-1}{2}$. Therefore, the total possible combinations of these can provide $2^{\frac{d^2-1}{2}}$ $X$ gauge operators. For each of these operators $g$, any error $e \equiv e.g$. Therefore, $e$ and $e.g$ belong to the same error class. The total number of error classes is
$2^{d^2}/2^{\frac{d^2-1}{2}} = 2^{\frac{d^2+1}{2}}$.

We thus obtain a  square root order reduction  in the number of error classes.  Table~\ref{table:1}  presents the reduction in the number of error classes for codes of distance 3, 5 and 7 due to gauge equivalence. As multiple errors belong to the same error class, a \textit{class representative} needs to be chosen efficiently for our ML decoder.

\begin{table}[htb]
\centering
\caption{Number of bit flip error class labels without and with gauge equivalence }
\begin{tabular}{ |c|c|c| }
 \hline
Code distance& \# Class labels without equivalence& \# Class labels with equivalence\\
 \hline
 3 & $2^9$ & $2^5$ \\
 5& $2^{25}$ & $2^{13}$ \\
 7& $2^{49}$ & $2^{25}$\\
 \hline
\end{tabular}

\label{table:1}
\end{table}
 
\textit{Choice of class representative}:  If errors $e_1, e_2, \hdots, e_k$ belong to the same error class $\mathcal{E}$, then we choose a class representative $e_i \in \mathcal{E}$ which is to be considered as the error on the codeword for any error $e \in \mathcal{E}$ occurring on the codeword. For example, in Fig.~\ref{fig:xgauge} it can be verified that bit flip errors on qubits 4, 7 and 8 produces the same syndrome as that on 5 . But for training the ML decoder, we choose the second one as the common label for both as it has a lower weight. However, simply considering the weight of the errors is not sufficient to obtain a class representative. For example, bit flip errors on qubits 1 and 2 lead to the same syndrome, and both of these errors have weight 1. Therefore, we use the method of \emph{lexicographic minima} to obtain the class representative.


For finding the lexicographic minima, we represent the qubits in the codeword as a characteristic vector of size $d^2$, where the $i^{th}$ bit is 1 if an error occurred on qubit $i$, and 0 otherwise. For example, the  bit flip errors on qubits 4, 7 and 8 in Fig. ~\ref{fig:xgauge} is denoted as  $e = [0 0 0 1 0 0 1 1 0]$. For each error class $\mathcal{E}$, we choose error $e$ as the class representative for which $L(e) = \displaystyle \sum_{i=0}^{d^2-1} 2^i*e[i]$ is minimum. For example, in a distance 3 heavy hexagonal code, bit flip error on qubits 1 and 2  are written as $e_1 = [1 0 0 0 0 0 0 0 0]$ and $e_2 = [0 1 0 0 0 0 0 0 0]$ respectively. We note that $X$-operator on qubits 1 and 2 form an $X$ gauge operator,  which when applied on $e_1$ gives us $e_2$, and vice-versa. So $e_1$ and $e_2$ belong to the same error equivalence class $\mathcal{E}$. Here, $L(e_1) = 2^1*1 + 2^2*0 + 2^3*0 + ... + 2^9*0  = 2$ and $ L(e_2) = 4$, so $e_1$ is chosen as the class representative.

 Algorithm~\ref{alg:gauge}  finds the gauge equivalent errors for each error $e$ in the training dataset  $E[1:N]$. Its  time and space complexity are given in Lemmata~\ref{bitcomplexity} and ~\ref{bitspace}.

\begin{algorithm}[H]
\caption{Search based method to find gauge equivalence class representatives for bit flip errors}
\label{alg:gauge}
\begin{algorithmic}[1]
\REQUIRE  $G_x$, the list of $X$-gauge operators; $d$, distance of the heavy hexagonal code; $E[1:N]$, the list of all errors $e_1, e_2, ..., e_N$ in the training dataset.

\ENSURE for all errors in $E[1:N]$, equivalent error vector $MinWeightEquiv$ having weight $MinWeight$
\FOR{$e = e_1$ to $e_N$}
\STATE $MinWeightEquiv= e$
\STATE  $MinWeight = \displaystyle \sum_{i=0}^{d^2-1} 2^i*e[i]$

\FORALL { $g_x \in G_x$}
\STATE  $GaugeEquiv=g_x\oplus e$ \* applying $g_x$ on $e$*\
\STATE  $weight_{GE}$ = $\displaystyle \sum_{i=0}^{d^2-1} 2^i*GaugeEquiv[i]$
\IF {$MinWeight > weight_{GE}$}
\STATE $MinWeight =  weight_{GE}$ 
\STATE $MinWeightEquiv = GaugeEquiv $
\ENDIF
\ENDFOR
\STATE return $MinWeightEquiv$
\ENDFOR
\end{algorithmic}
\end{algorithm}


\begin{theorem}
\label{thm:bit}

Given a training sample $e$ in which each qubit may be either error-free or has a bit flip error only, its corresponding gauge equivalent class representative $e_{min} \equiv e$ can be computed by Algorithm \ref{alg:gauge} in $\mathcal{O}(N \cdot d^2 \cdot 2^{d^2/2})$ time using $\mathcal{O}(d^2 \cdot 2^{\frac{d^2-1}{2}} + N. d^2)$ space, where $d$ is the distance of the code and $N$ is the number of training samples.
\end{theorem}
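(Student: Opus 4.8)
The plan is to treat Theorem~\ref{thm:bit} as a combined correctness-and-complexity statement about Algorithm~\ref{alg:gauge}, separating the argument into three parts: (i) the loop exhausts the entire gauge-equivalence class of $e$, so the returned string is genuinely the class representative $e_{min}$; (ii) the time bound; and (iii) the space bound. The crucial preliminary observation is that the input list $G_x$ must be read as the full set of $X$-gauge operators, i.e.\ all products of the $\frac{d^2-1}{2}$ $X$-gauge generators counted in Lemma~\ref{lemma1}, and hence $|G_x| = 2^{\frac{d^2-1}{2}} = \Theta(2^{d^2/2})$.

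For correctness, I would first note that for bit-flip errors the action of an $X$-gauge operator $g_x$ on an error string $e$ is Pauli multiplication, which on the length-$d^2$ support vectors is exactly the symmetric difference, realized in the algorithm by the bitwise XOR $e \oplus g_x$ (line 4). By Lemma~\ref{equiv}, $e$ and the product $e \cdot g_x$, whose support vector is $e \oplus g_x$, lie in the same error class for every $g_x \in G_x$; conversely, because $G_x$ is the whole gauge group, the set $\{\, e \oplus g_x : g_x \in G_x \,\}$ is precisely the equivalence class of $e$. The loop scans this entire set while maintaining, via the test on line 6, the element minimizing the lexicographic weight $\sum_{k} 2^k \cdot Arr[k]$ introduced above; therefore on termination $MinWeightEquiv$ is the lexicographic minimum of the class, which is the chosen representative $e_{min}$.

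For the time bound I would count the per-sample cost and then multiply by $N$. Each of the $|G_x| = 2^{\frac{d^2-1}{2}}$ iterations performs a length-$d^2$ XOR (line 4), a weight evaluation that sums $d^2$ terms (line 5), and an $\mathcal{O}(1)$ comparison together with an array update of size $d^2$ (lines 6--8), all of which are $\mathcal{O}(d^2)$. Hence one error string costs $\mathcal{O}(d^2 \cdot 2^{\frac{d^2-1}{2}}) = \mathcal{O}(d^2 \cdot 2^{d^2/2})$, and running the routine over all $N$ training samples gives the stated $\mathcal{O}(N \cdot d^2 \cdot 2^{d^2/2})$. For the space bound I would account separately for the stored gauge group and the data: holding the $2^{\frac{d^2-1}{2}}$ gauge operators, each a $d^2$-bit array, costs $\mathcal{O}(d^2 \cdot 2^{\frac{d^2-1}{2}})$, the $N$ training strings of length $d^2$ cost $\mathcal{O}(N \cdot d^2)$, and the auxiliary variables $MinWeightEquiv$, $GaugeEquiv$, etc.\ add only $\mathcal{O}(d^2)$, summing to $\mathcal{O}(d^2 \cdot 2^{\frac{d^2-1}{2}} + N \cdot d^2)$.

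The analysis itself is elementary bookkeeping, so I expect the only genuine subtlety --- the main obstacle --- to be the correctness step, specifically pinning down that $G_x$ denotes the full, exponentially large gauge group rather than just its generators. If $G_x$ contained only the generators, the loop would cover only single-generator neighbours of $e$ and could miss the true class minimum, breaking correctness; it is precisely this identification that both justifies the $2^{d^2/2}$ factor in the runtime and makes the search exhaustive over the equivalence class. I would therefore make this reading explicit at the outset and lean on Lemma~\ref{equiv} together with the generator count of Lemma~\ref{lemma1} to close the argument.
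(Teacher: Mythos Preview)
Your proposal is correct and follows essentially the same approach as the paper: the paper's proof simply delegates the time and space bounds to Lemmata~\ref{bitcomplexity} and~\ref{bitspace}, whose arguments match your parts (ii) and (iii) almost line for line, including the key identification of $G_x$ with the full set of $2^{(d^2-1)/2}$ gauge operators. Your correctness argument (part~(i)) goes a bit beyond what the paper states explicitly, but it is sound and consistent with the paper's intended reading of Algorithm~\ref{alg:gauge}.
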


Proof:
For an error $e$, let $MinWeightEquiv \equiv e$ be the lexicographic minimum in $E[1:N]$ which is gauge equivalent to $e$. The gauge operators and errors are represented as $\{0,1\}$ strings. Let $L(e)$ denote the lexicographic value corresponding to the binary characteristic vectors representation of $e$. Therefore, $L(MinWeightEquiv) \leq L(e_{equiv})$ $\forall$ $e_{equiv} \equiv e$. Algorithm~\ref{alg:gauge} finds $L(e_{equiv})$ in decimal for all $e_{equiv} \equiv e$, and returns the vector  $MinWeightEquiv$  which has the minimum value. 
Hence, $L(MinWeightEquiv) \leq L(e_j)$ $\forall$ $e_j \equiv e$. Therefore, $MinWeightEquiv = e_{min}$.

The proofs for time and space complexity are given below in Lemmata~\ref{bitcomplexity} and ~\ref{bitspace} respectively.

\begin{lemma}
\label{bitcomplexity}
Given a training sample $e$ in which each qubit may be either error-free or have bit flip error only, Algorithm \ref{alg:gauge} computes its corresponding gauge equivalent class representative $e_{min} \equiv e$  in $\mathcal{O}(N \cdot d^2 \cdot 2^{d^2/2})$ time, where $d$ is the distance of the code and $N$ is the number of training samples. 
\end{lemma}
Proof:  For a distance $d$ heavy hexagonal code, the total number of data qubits is $d^2$ and the    number of $X$ gauge generators is $\frac{d^2-1}{2}$. Therefore, the total possible combinations of the $X$ gauge operators is $2^{\frac{d^2-1}{2}}$.Each of the line numbers 2, 5, and 6 of Algorithm \ref{alg:gauge} requires $d^2$ bitwise operation over the length of an error string.

There are $2^{\frac{d^2-1}{2}}$ possible gauge operators. Therefore the inner loop (line 4 to 11) runs $2^{\frac{d^2-1}{2}}$ times. So the time for each training sample is $d^2 +  d^2 \cdot 2^{\frac{d^2-1}{2}} + c$, where $c$ is a constant, leading to a time complexity of $\mathcal{O}(d^2 \cdot 2^{d^2/2})$ for the inner \emph{for} loop of Algorithm \ref{alg:gauge}. Therefore, the total time complexity of Algorithm \ref{alg:gauge} to find the gauge equivalence for the entire training dataset consisting of $N$ samples is $\mathcal{O}(N \cdot d^2 \cdot 2^{d^2})$.


\begin{lemma}
\label{bitspace}
Given a training sample $e$ in which each qubit may be either error-free or has a bit flip error only, Algorithm \ref{alg:gauge} computes its corresponding gauge equivalent class representative $e_{min} \equiv e$  with space complexity of $\mathcal{O}(d^2 \cdot 2^{\frac{d^2-1}{2}} + N. d^2)$, where $d$ is the distance of the code and $N$ is the number of training samples.
\end{lemma}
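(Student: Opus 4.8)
The plan is to decompose the total memory footprint of Algorithm~\ref{alg:gauge} into the persistent inputs that must remain resident and the transient working variables allocated during execution, then argue that exactly two terms dominate. First I would account for the gauge-operator list $G_x$: by Lemma~\ref{lemma1} the number of distinct products of $X$-gauge generators is $2^{\frac{d^2-1}{2}}$, and since the loop on line~3 ranges over every element of $G_x$ (precisely as used in the runtime argument of Lemma~\ref{bitcomplexity}), the list must hold all $2^{\frac{d^2-1}{2}}$ such operators explicitly. Each operator is stored as a binary array of length $d^2$, so $G_x$ occupies $\mathcal{O}(d^2 \cdot 2^{\frac{d^2-1}{2}})$ cells.

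Second I would account for the training data. Although Algorithm~\ref{alg:gauge} is stated for a single sample $e$, the complexity is reported over the full dataset of $N$ samples, each a length-$d^2$ string; retaining all of them together with their computed representatives contributes $\mathcal{O}(N \cdot d^2)$. The remaining step is to bound the transient state inside the loop: $MinWeightEquiv$ and $GaugeEquiv$ are each length-$d^2$ arrays, while $MinWeight$ and $weight_{GE}$ are single scalars, so the scratch memory is only $\mathcal{O}(d^2)$ and is reused across iterations rather than accumulated. Summing the three contributions and absorbing the $\mathcal{O}(d^2)$ scratch term into the larger ones yields the claimed $\mathcal{O}(d^2 \cdot 2^{\frac{d^2-1}{2}} + N \cdot d^2)$.

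I expect the only genuine subtlety—and hence the step I would state most carefully—to be the justification that $G_x$ is materialized in full. Read naively, $G_x$ could denote merely the $\frac{d^2-1}{2}$ generators, which would give a much smaller $\mathcal{O}(d^4)$ storage term; the exponential factor appears only if the loop enumerates all $2^{\frac{d^2-1}{2}}$ generator products. I would resolve this by appealing to Lemma~\ref{bitcomplexity}, whose count of $2^{\frac{d^2-1}{2}}$ loop iterations forces $|G_x| = 2^{\frac{d^2-1}{2}}$, so the dominant storage term must be $\mathcal{O}(d^2 \cdot 2^{\frac{d^2-1}{2}})$. Everything beyond this is a routine additive accounting of array lengths.
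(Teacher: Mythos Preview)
Your proposal is correct and follows essentially the same decomposition as the paper's own proof: storage for the full list of $2^{\frac{d^2-1}{2}}$ gauge operators at $d^2$ bits each, plus storage for the $N$ training samples at $d^2$ bits each. Your additional remarks about the $\mathcal{O}(d^2)$ scratch variables and about why $G_x$ must be the full operator list (rather than just the generators) are more explicit than the paper, but the underlying argument is the same.
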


Proof:
For a distance $d$ heavy hexagonal code, there are $d^2$ data qubits in the lattice for a logical qubit and $\frac{d^2-1}{2}$ $X$ gauge generators. 

    Each of the $2^{\frac{d^2-1}{2}}$ possible $X$ gauge operators need $d^2$ bits. Hence the total space required to store all the gauge operators is $\mathcal{O}(d^2 \cdot 2^{\frac{d^2-1}{2}})$. Furthermore, the entire training dataset of $N$ samples needs space complexity of  Algorithm \ref{alg:gauge} is  $\mathcal{O}(d^2 \cdot 2^{\frac{d^2-1}{2}} + N. d^2)$.

\begin{table}[h]
\centering
\caption{Number of error class labels without and with gauge equivalence in case of phase flip }
\begin{tabular}{ |c|c|c| }
 \hline
Code distance& \# Class labels without equivalence& \# Class labels with equivalence\\
 \hline
 3 & $2^9$ & $2^3$ \\
 5& $2^{25}$ & $2^{5}$ \\
 7& $2^{49}$ & $2^{7}$\\
 \hline
\end{tabular}

\label{table:2}
\end{table}

In the expression of space complexity  $\mathcal{O}(d^2 \cdot 2^{\frac{d^2-1}{2}} + N. d^2)$ (Lemma 7), the second term ($N. d^2$) dominates when $N > 2^{\frac{d^2-1}{2}} $.

Instead of calculating all possible combinations of $X$ gauge operators beforehand, if we calculate them on the go, then the space complexity can be reduced to $\mathcal{O}(d^4 + N. d^2)$. However, for such an algorithm, the entire set of gauge operators need to be generated for every training sample, leading to a time complexity of upto $\mathcal{O}(N \cdot d^4 \cdot 2^{d^2/2})$.

\subsection{Reducing error classes for phase flip using search based gauge equivalence}

In this subsection, we present studies for phase flip error similar to those for bit flip error in the previous subsection. The Bacon-Shor structure \cite{bacon2006operator} of phase stabilizers allow for more equivalent subsystems, leading to fewer error classes. We first depict this equivalence for phase flip errors in Lemma~\ref{phase}, and provide the algorithm to find the class representatives for gauge equivalent phase flip error as Algorithm~\ref{alg:phasegauge}.

\begin{lemma}
\label{phase}
For any column $1 \leq j \leq d$ of physical qubits in the lattice for heavy hexagonal QECC, the phase operator $Z_{i,j}$ on $i^{th}$ qubit of the $j^{th}$ column is gauge equivalent to the phase operator $Z_{1,j}$.
\end{lemma}

Proof:
(by induction)\\
\textit{Base Case}: If a phase flip error occurs on the first qubit of any column of qubits $j$, then it is trivially equivalent to $Z_{1,j}$.

\textit{Induction Hypothesis}: For all positive integer $k$, $1 \leq k < d$, $Z_{k,j}$ is gauge equivalent to $Z_{1,j}$.

\textit{Induction Step}: Consider a  phase operator $Z_{k+1,j}$. Now, for all $k$, $1 \leq k < d$, there exists a $Z$ gauge operator given by
$Z_{k,j}\cdot Z_{k+1,j}$. Therefore, $Z_{k+1,j}$ is gauge equivalent to $(Z_{k+1,j}) \cdot (Z_{k,j} Z_{k+1,j}) = Z_{k,j}$. By the induction hypothesis, as $Z_{k,j}$ is gauge equivalent to $Z_{1,j}$,  $Z_{k+1,j}$ is  gauge equivalent to $Z_{1,j}$.

\begin{corollary}
There are $2^d$ non-equivalent error classes for phase flip errors.
\end{corollary}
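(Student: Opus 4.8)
The plan is to reduce the count of phase-flip error classes to an independent per-column parity count, with Lemma~\ref{phase} as the main engine. First I would record that gauge equivalence is multiplicative for phase errors: since all $Z$-type Pauli operators commute and the gauge operators form a group, if $e_1 \equiv e_1'$ and $e_2 \equiv e_2'$ then $e_1 e_2 \equiv e_1' e_2'$ (writing $e_1' = e_1 g$, $e_2' = e_2 h$ and sliding the gauge products $g,h$ to the right). Combined with Lemma~\ref{phase}, which gives $Z_{i,j} \equiv Z_{1,j}$ for every row $i$ of a fixed column $j$, this shows that any phase error supported on column $j$ with $k$ flipped qubits is equivalent to $Z_{1,j}^{k} = Z_{1,j}^{k \bmod 2}$, i.e. to the identity when $k$ is even and to $Z_{1,j}$ when $k$ is odd. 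Hence each column admits at most two equivalence classes, indexed by the parity of the number of phase errors it carries.

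Next I would argue that the columns behave independently. The only gauge generators relevant to phase errors are the weight-two $Z$-type generators $Z_{i,j}Z_{i+1,j}$, and every such generator is supported entirely within a single column $j$. Consequently no gauge operator ever relates phase errors living in distinct columns, so an arbitrary phase error decomposes column-by-column, and its class is determined by the tuple of per-column parities. With $d$ columns each contributing a single parity bit, this yields at most $2^d$ classes.

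The step I expect to require the most care is the converse, namely showing these $2^d$ classes are genuinely distinct so that $2^d$ is exact and not merely an upper bound. For this I would exhibit a gauge invariant separating them: the vector of column parities itself. Each generator $Z_{i,j}Z_{i+1,j}$ flips exactly two qubits, both in column $j$, and so preserves the parity of phase errors in every column; since the entire gauge group is generated by these operators, the column-parity vector is constant on each equivalence class. Thus two errors with different column-parity vectors cannot be gauge equivalent. Combining the upper bound of the first two paragraphs with this invariant gives exactly $2^d$ non-equivalent phase-flip error classes, in agreement with the entries of Table~\ref{table:2}.
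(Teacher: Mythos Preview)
Your proof is correct and follows the same approach as the paper: invoke Lemma~\ref{phase} to collapse each column to a single parity bit and count $2^d$ possibilities over the $d$ columns. The paper's argument is much terser and leaves implicit the multiplicativity step, the column-independence observation, and the distinctness invariant that you supply explicitly.
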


Proof:
This follows from Lemma~\ref{phase} because $q_j^{i}$ is gauge equivalent to $q_j^{1}$ for the $j^{th}$qubit column, $1 \leq i \leq d$,  each qubit column corresponds to one possible phase flip error class. There being $d$ qubit columns, there are $2^d$ non-equivalent phase flip error classes.

Algorithm~\ref{alg:phasegauge} below finds the gauge  equivalence class representatives for each error $e$ in the training dataset.  Its time and space complexity analyses appear in Lemmata~\ref{phasecomplexity} and ~\ref{phasespace}.

\begin{algorithm}[H]
\caption{Search based method to find gauge equivalence class representatives for phase flip errors}
\label{alg:phasegauge}
\begin{algorithmic}[1]
\REQUIRE $d$, distance of the code, error string  $E[1:N]$, the list of all errors $e_1, e_2, ..., e_N$ in the training dataset.

\ENSURE for all errors in $E[1:N]$, equivalent error vector $MinWeightEquiv$ having weight $MinWeight$
\FOR{$e = e_1$ to $e_N$}

\FOR{$j=d$ to $d^2 -1$}
\STATE $l=j\%d$
\STATE $e[l] = e[l] \oplus e[j]$
\ENDFOR
\STATE $MinWeightEquiv$ = $e$
\STATE return $MinWeightEquiv$
\ENDFOR
\end{algorithmic}
\end{algorithm}

\begin{theorem}
\label{thm:phase}
Given a training sample $e$ in which each qubit may be either error free or having phase flip error only, its corresponding gauge equivalent minima $e_{min} \equiv e$ can be computed by Algorithm \ref{alg:phasegauge} in $\mathcal{O}(N.d^2)$ time using  $\mathcal{O}(N.d^2)$ space, where $d$ is the distance of the code and $N$ is the number of training samples.
\end{theorem}

Proof:
For an error $e$, let $e_{min}$ be the lexicographic minimum corresponding to $e$. According to Lemma 8, for any column $1 \leq j \leq d$ of qubits where $Z_{i,j}$ denotes the phase operator on $i^{th}$ qubit of the $j^{th}$ column (where $1 \leq i \leq d$),  $Z_{i,j}$ is gauge equivalent to $Z_{1,j}$. For an error $e = \{e^1 e^2 \hdots e^k\}$, where $e^i \in \{I, Z\}$, $1 \leq i \leq k$ ($k$ is the number of qubits), Algorithm~\ref{alg:phasegauge} finds the gauge equivalence $e^i _{equiv}$ of each $e^i$ using Lemma~\ref{phase}. The final gauge equivalent error $e_{equiv}$ is determined as 
$e_{equiv} = \{e^1 _{equiv} \cdot e^2 _ {equiv} \hdots e^k _{equiv}\}$. As it finds the equivalent error as the top most row's error in the lattice, hence it will be of the minimum weight. Hence $e_{equiv} = e_{min}$
The proofs for time and space complexity are given below in Lemmata~\ref{phasecomplexity} and ~\ref{phasespace} respectively.

\begin{lemma}
\label{phasecomplexity}
Given a phase flip error $e$, its corresponding gauge equivalent minima $e_{min} \equiv e$ can be computed by Algorithm \ref{alg:phasegauge} in $\mathcal{O}(N. d^2)$ time, where $d$ is the distance of the code.
\end{lemma}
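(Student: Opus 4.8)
The plan is to prove this statement purely by counting the operations performed by Algorithm~\ref{alg:phasegauge}, since the claim concerns only running time; the correctness of the output (that $e_{min}$ is genuinely gauge equivalent to $e$) is inherited directly from Lemma~\ref{phase}. Under the array representation the $d^2$ qubits are indexed $0, 1, \ldots, d^2-1$, with the first row occupying indices $0$ through $d-1$, so that the qubit at index $j$ lies in column $j \bmod d$. By Lemma~\ref{phase} the phase error on any qubit is gauge equivalent to a phase error on the first-row qubit of its column, and the algorithm exploits exactly this by ``folding'' the error bit of every non-first-row qubit into the first-row position $l = j \bmod d$ of its column through the XOR update on line~3.

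First I would bound the iteration count. The loop index $j$ ranges over $d, d+1, \ldots, d^2-1$, which is exactly $d^2 - d$ values; equivalently, every qubit lying outside the first row is visited precisely once. Next I would argue that the loop body costs $\mathcal{O}(1)$: under the standard word-RAM model the computation $l = j \bmod d$ is a single constant-time arithmetic operation, and the update $e[l] \leftarrow e[l] \oplus e[j]$ is a single-bit XOR, also $\mathcal{O}(1)$. The assignment and return that lie outside the loop contribute only an additive constant. Multiplying the per-iteration cost by the number of iterations then yields total work $\mathcal{O}(d^2 - d) = \mathcal{O}(d^2)$, as claimed.

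The count itself is routine, so the closest thing to an obstacle is justifying that the modular reduction $j \bmod d$ is genuinely $\mathcal{O}(1)$ rather than dependent on the bit-length of $j$. This is immediate in the word-RAM model, since all indices are bounded by $d^2$ and hence fit within a single machine word — the same computational model implicitly used in the bit-flip analysis of Lemma~\ref{bitcomplexity}. With that convention fixed, the bound follows directly and no further case analysis is required.
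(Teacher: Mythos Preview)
Your proposal is correct and follows essentially the same argument as the paper: both invoke Lemma~\ref{phase} for correctness, count the $d^2-d$ loop iterations, observe that lines~2 and~3 each take constant time, and conclude $\mathcal{O}(d^2)$. Your treatment is slightly more explicit in justifying the $\mathcal{O}(1)$ cost of the modular reduction via the word-RAM model, but the structure and substance match the paper's proof.
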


Proof:
 Lemma \ref{phase} asserts that for each column of the heavy hexagonal code lattice, a phase flip error on a qubit in that column is gauge equivalent to the phase flip error on the top most qubit of that column. Therefore, Algorithm \ref{alg:phasegauge} starts from the second row of the lattice, leaving aside the topmost row (length $d$) which consists of the first qubits of each column. Each of line numbers 3 and 4 of Algorithm \ref{alg:phasegauge} requires a constant time operation for  $d^2-d$ error strings. The outer loop, i.e. line number 1 runs $N$ times, as the size of the entire training dataset is $N$. Therefore, the time complexity of Algorithm \ref{alg:phasegauge} is $N \cdot c \cdot (d^2-d)$, where $c$ is a constant, i.e., $\mathcal{O}(N. d^2)$.

It is to be noted that we obtain a quadratic reduction in the number of error classes for both bit flip and phase flip errors by employing gauge equivalence but the reduction is much higher for phase flip errors.

\begin{lemma}
\label{phasespace}
Given a phase flip error $e$, its corresponding gauge equivalent minima $e_{min} \equiv e$, computed by Algorithm \ref{alg:phasegauge} has a space complexity of $\mathcal{O}(N.d^2)$ where $d$ is the distance of the code, and $N$ number of training samples.
\end{lemma}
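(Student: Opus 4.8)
The plan is to account for every unit of memory touched by Algorithm~\ref{alg:phasegauge} and, crucially, to observe that—in sharp contrast with the bit flip case treated in Lemma~\ref{bitspace}—no precomputed table of gauge operators is ever stored. First I would note that the algorithm operates \emph{in place} on a single error string $e$ of length $d^2$: the loop simply overwrites $e[l]$ through the XOR update $e[l] = e[l] \oplus e[j]$ on line~3, allocating neither an auxiliary array nor any structure whose size grows with the number of gauge operators. Hence, per training sample, the only persistent storage is the string $e$ itself, costing $\mathcal{O}(d^2)$ bits, together with the loop counter $j$ and the index $l = j \bmod d$, which occupy only constant space.

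Next I would emphasize the structural reason this avoids the exponential $2^{(d^2-1)/2}$ term appearing in Lemma~\ref{bitspace}. The phase flip minimization exploits the column collapse established in Lemma~\ref{phase}: every $Z_{i,j}$ is gauge equivalent to $Z_{1,j}$, so the representative is obtained by folding each column onto its top row via a fixed sequence of pairwise XORs indexed directly by position. Because the procedure reads the equivalence off the column index rather than iterating over the gauge group, it never materializes the set of gauge operators and therefore incurs no storage cost beyond the string being processed.

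Finally I would aggregate over the full dataset. Since the decoder must retain all $N$ training samples, each an error string of $d^2$ bits, storing the dataset costs $\mathcal{O}(N \cdot d^2)$. As the per-sample working space $\mathcal{O}(d^2)$ is subsumed in this bound, the total space complexity is $\mathcal{O}(N \cdot d^2)$, as claimed.

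The step I expect to require the most care is not a calculation but a justification: making explicit that the absence of a stored gauge-operator table is legitimate, i.e., that the in-place column folding genuinely yields the same representative an explicit enumeration over the gauge group would produce. This is precisely what Lemma~\ref{phase} guarantees, so once that appeal is made the remainder reduces to adding $\mathcal{O}(d^2)$ per sample across $N$ samples.
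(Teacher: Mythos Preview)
Your proposal is correct and follows essentially the same approach as the paper: the dominant cost is storing the $N$ training samples of length $d^2$, yielding $\mathcal{O}(N\cdot d^2)$. Your argument is considerably more detailed than the paper's one-line proof—in particular, your explicit contrast with Lemma~\ref{bitspace} and the justification via Lemma~\ref{phase} that no gauge-operator table need be stored are welcome additions the paper omits.
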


Proof:
 In order to store $N$ number of training samples we need $\mathcal{O}(N. d^2)$ amount of space.  Therefore, the total space needed for Algorithm \ref{alg:phasegauge} is $\mathcal{O}(N. d^2)$ .

We observe that based on linear search the time complexity for finding equivalence class representatives of bit flip errors is $\mathcal{O}(N.d^2 \cdot 2^{d^2})$ and for phase flip errors is $\mathcal{O}(N.d^2)$. In the next subsection, we  propose a method called rank based gauge equivalence which can further reduce the time complexity for the case of bit flip errors, which is much higher than that for phase flip errors.

\subsection{Reducing error classes for bit flip using rank based gauge equivalence }

This algorithm relies on finding the rank of a matrix $M = [GEN_x]$, constructed with a column corresponding to each of  the $X$ gauge generators. Recall that the rank of a matrix denotes the number of linearly independent columns in it \cite{strang2006linear}. Since the gauge generators are linearly independent, $M$  exhibits full rank equal to $(d^2-1)/2$, the number of $X$ gauge generators. 

For two gauge equivalent errors $e_i$ and $e_k$,  if we obtain another matrix $M' = [GEN_x, e_i - e_k]$  by appending one more column to $M$, then $M$ and $M'$ has the same rank because $e_k$ can be expressed as $(\Pi_j g_j) e_i$. We use this notion to find the gauge equivalence for bit flip errors faster than Algorithm~\ref{alg:gauge}.


In Algorithm~\ref{alg:rankgauge}, we provide an algorithm to find the equivalent for each error $e$ in the training dataset, and depict its time and space complexities in Lemmata~\ref{rank} and ~\ref{rankspace} respectively.

The search space of Algorithm~\ref{alg:rankgauge} differs from that of Algorithm~\ref{alg:gauge}. Let $EQUIV(e)$ denote the set of all gauge equivalent errors corresponding to an error $e$. The cardinality of $EQUIV(e)$ scales exponentially in the number of gauge \emph{generators}. On the other hand, Algorithm~\ref{alg:rankgauge} determines $EQUIV_{TS}(e)$ which is the set of all gauge equivalent errors corresponding to an error $e$ present in the set of training samples. Therefore, $EQUIV_{TS}(e) \subseteq EQUIV(e)$. Since Algorithm~\ref{alg:rankgauge} only searches for the gauge equivalent errors present in the training samples, the search space for this algorithm is never greater than that for Algorithm~\ref{alg:gauge}. Both  Algorithms~\ref{alg:gauge} and ~\ref{alg:rankgauge} reduce the error classes for bit flip using gauge equivalence. While in Algorithm~\ref{alg:gauge} the class label for the equivalence classes is of minimum possible weight, in Algorithm~\ref{alg:rankgauge} the class label is the minimum possible weight for errors present in the training set. But our target is to reduce the number of error classes. So both the algorithms serve that purpose. It is to be noted that the equivalence error classes  remain the same for both methods. Only the class representatives are likely to differ.

\begin{theorem}
\label{thm:rank}
Given a training sample $e$ in which each qubit may be error free or has bit flip error only, its corresponding gauge equivalent minima $e_{min} \equiv e$ can be computed, by Algorithm \ref{alg:rankgauge} in $\mathcal{O}(N^2 \cdot d^4)$ time using  $\mathcal{O}(d^4+N.d^2)$ space, where $d$ is the distance of the code and $N$ is the number of training samples.
\end{theorem}

Proof:
We prove this theorem in two steps. The proofs for time and space complexity are shown individually in Lemmata~\ref{rank} and ~\ref{rankspace} respectively.

\begin{algorithm}[H]
\caption{Rank based method to find gauge equivalence class representatives for bit flip errors}
\label{alg:rankgauge}
\begin{algorithmic}[1]
\REQUIRE $GEN_x$, the list of $X$ gauge generators, $d$, distance of the code,  List of all errors in the dataset $E[1:j]$ which consists $e_1, e_2, ..., e_j$.

\ENSURE Error string $MinWeightEquiv$ which is equivalent to $e_j$.

\STATE $MinWeightEquiv= {e_j}$
\STATE  $MinWeight = \displaystyle \sum_{k \in d^2}  2^k*{e_j}[k]$

\STATE $M = [GEN_x]$ where each $GEN_x$, $0 \leq x \leq (d^2-1)/2$ forms a column of $M$.

\STATE Calculate the rank $M$

\FORALL{$e = e_1$ to $e_{j-1}$}

\IF {$syndrome(e_j) == syndrome(e)$}

\STATE Form a matrix $M'$ by appending $e_j - e$ as the last column of $M$

\STATE Calculate the rank of $M'$

 \IF {$Rank(M) == Rank(M')$ }

\STATE  $weight_{GE}$ = $\displaystyle \sum_{k \in d^2} 2^k*e[k]$

 \IF {$MinWeight > weight_{GE}$}

 \STATE $MinWeight =  weight_{GE}$
 \STATE $MinWeightEquiv = e $
 \ENDIF

 \ENDIF

 \ENDIF

\ENDFOR

\STATE Return $MinWeightEquiv$
\end{algorithmic}
\end{algorithm}

\begin{lemma}
\label{rank}
Given a bit flip error $e$, its corresponding gauge equivalent minima $e_{min} \equiv e$ can be computed by Algorithm \ref{alg:rankgauge} in $\mathcal{O}(N^2 \cdot d^4)$ where $d$ is the distance of the code and entire dataset consists of $N$ errors.
\end{lemma}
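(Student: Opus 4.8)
The plan is to charge the running time to three components: the one-time construction and rank reduction of the base matrix $M$ (lines 3--4), the work done inside a single pass of the FORALL loop (lines 6--13), and the number of times that loop body is executed over the whole dataset. First I would fix the sizes of the objects involved. By Lemma~\ref{lemma1} there are $(d^2-1)/2$ linearly independent $X$-gauge generators, each a binary vector of length $d^2$, so $M=[GEN_x]$ is a $d^2\times\frac{d^2-1}{2}$ matrix of full column rank, and $M'$ merely appends one further column $e_j-e$ of length $d^2$.

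The crux is the cost of a rank computation. I would argue it is $\mathcal{O}(d^4)$ per candidate by reusing the fixed matrix $M$: reduce $M$ to echelon form once, recording its $(d^2-1)/2$ pivots, and then test whether the appended column $e_j-e$ lies in the column space of $M$. By the construction preceding Algorithm~\ref{alg:rankgauge}, $e$ and $e_j$ are gauge equivalent precisely when $\mathrm{Rank}(M)=\mathrm{Rank}(M')$, i.e. when this column adds no new dimension. Reducing a single length-$d^2$ column against $\mathcal{O}(d^2)$ pivots costs $\mathcal{O}(d^2\cdot d^2)=\mathcal{O}(d^4)$. All other per-iteration operations are cheaper: the syndrome comparison on line 6 is $\mathcal{O}(d^2)$ on precomputed syndrome strings of length $(d^2-1)/2$, forming $M'$ on line 7 is $\mathcal{O}(d^2)$, and the weight evaluation and bookkeeping on lines 10--13 are $\mathcal{O}(d^2)$. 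Hence one pass of the loop body costs $\mathcal{O}(d^4)$.

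It remains to count the iterations. The loop of line 5 ranges over $e_1,\dots,e_{j-1}$, i.e. $\mathcal{O}(N)$ candidates, so a single invocation that returns the representative of one error $e_j$ runs in $\mathcal{O}(N\cdot d^4)$, absorbing the one-time construction of $M$. Labelling the entire dataset repeats this for each of the $N$ errors, and the inner-loop lengths sum to $\sum_{j=1}^{N}(j-1)=\mathcal{O}(N^2)$ rank checks, giving the claimed total $\mathcal{O}(N^2\cdot d^4)$; the one-time echelon reduction of $M$ (at most $\mathcal{O}(d^6)$) is dominated once $N\geq d$.

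I expect the main obstacle to be justifying the $\mathcal{O}(d^4)$ rank bound rather than $\mathcal{O}(d^6)$: a from-scratch Gaussian elimination on an $\mathcal{O}(d^2)\times\mathcal{O}(d^2)$ matrix, as line 8 reads literally, is cubic in the dimension and hence $\mathcal{O}(d^6)$. Achieving $\mathcal{O}(d^4)$ requires making explicit that $M$ is fixed across all candidates, so that only an incremental single-column membership test against its precomputed echelon form is needed per candidate. Confirming this reuse, and that the residual bit-vector operations are genuinely lower order, is the step I would treat most carefully, while the remaining per-line accounting is routine.
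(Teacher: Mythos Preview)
Your overall structure matches the paper's: both proofs account for the per-line bit-string operations (lines 2, 6, 10 at $\mathcal{O}(d^2)$), cost each rank check at $\mathcal{O}(d^4)$, and then sum $\sum_{j=1}^{N}(j-1)=\mathcal{O}(N^2)$ inner-loop iterations to arrive at $\mathcal{O}(N^2\cdot d^4)$. The one substantive difference is precisely the point you flagged as the main obstacle. The paper obtains the $\mathcal{O}(d^4)$ cost of line 8 by asserting, with a textbook citation, that the rank of an $m\times n$ matrix can be computed in $\mathcal{O}(m\cdot n)$; applied to the $d^2\times\tfrac{d^2-1}{2}$ matrix $M'$ this gives $\mathcal{O}(d^4)$ directly, without any reuse of $M$ across candidates. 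You instead reach the same bound by precomputing the echelon form of the fixed $M$ once and then reducing the single appended column $e_j-e$ against its $\mathcal{O}(d^2)$ pivots, which is a genuine $\mathcal{O}(d^4)$ incremental membership test. Your route is more defensible, since a from-scratch Gaussian elimination on a $d^2\times\mathcal{O}(d^2)$ matrix is $\mathcal{O}(d^6)$ rather than $\mathcal{O}(d^4)$, and the paper's blanket $\mathcal{O}(mn)$ claim for rank is at best a lower bound; the paper's route is shorter but leans on that claim without further justification. Either way the final complexity and the counting argument coincide.
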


Proof:

In Algorithm \ref{alg:rankgauge}, each of the lines 2 and 10 requires  bitwise operation over the length of the error string
which is $d^2$. Computing the syndrome equality in line 6 requires  bitwise operation over the length of the syndrome string which is $(d^2-1)/2$. There are $d^2$ number of data qubits and the number of gauge generators is $(d^2-1)/2$. Furthermore, calculation of rank of an $m \times n$ matrix is $\mathcal{O}(m \cdot n)$ \cite{cormen2022introduction}.
Therefore, calculating the rank of matrix M of size ($d^2 \times (d^2-1)/2)$) in line 4, has a running time of $\mathcal{O}(d^4)$. For a particular error $e_j$, the time required to calculate the gauge equivalence is given by $\sum_{i=1}^{j-1}\mathcal{O}(d^4 +d^2) $ = $\mathcal{O}((d^4+d^2)\cdot j )$.\\
Hence, for the entire dataset consisting of $N$ errors, the time complexity of Algorithm \ref{alg:rankgauge} is
\begin{eqnarray*}
\mathcal{O}((d^4+d^2)\cdot \sum_{j=1}^{N}j )
&=& \mathcal{O}(N^2 \cdot (d^4 +d^2))\\
&=& \mathcal{O}(N^2 \cdot d^4)
\end{eqnarray*}

\begin{lemma}
\label{rankspace}
Given a bit flip error $e$, Algorithm \ref{alg:rankgauge} computes its corresponding gauge equivalent minima $e_{min} \equiv e$ using  $\mathcal{O}(d^4+N.d^2)$ space where $d$ is the distance of the code and $N$ is the number of training samples.
\end{lemma}

Proof:
For a distance $d$ heavy hexagonal code, the total number of data qubits is $d^2$ and the number of $X$ gauge generators is $\frac{d^2-1}{2}$. Each of the gauge generators consists of $d^2$ bits. Hence, to store the generator matrix $M$, the space needed is $d^2 \cdot (d^2-1)/2$. To store the appended matrix $M'$,  $d^2 \cdot ((d^2-1)/2+1)$ space is needed and  to store $N$ number of training samples, $N \cdot d^2$ space is needed. Hence, the total amount of space needed is $d^2 \cdot (d^2-1)/2 + d^2 \cdot ((d^2-1)/2+1) + N \cdot d^2 $ leading to a
space complexity of $\mathcal{O}(d^4+N.d^2)$.

A pertinent question, therefore, is the criterion for which rank based equivalence is faster than the search based one. This is answered in Corollary~\ref{cor:rankvssearch}.

\begin{corollary}
\label{cor:rankvssearch}
Rank based equivalence method finds gauge equivalence faster than the search based method if $2^{d^2} \geq c_ \cdot (N\cdot d^2) $, where $d$ is the distance of the code, $N$ the number of training samples, and $c$ a constant.
\end{corollary}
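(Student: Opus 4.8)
The plan is to derive the stated inequality by a direct comparison of the worst-case running times of the two algorithms, both of which have already been established. First I would recall from Theorem~\ref{thm:bit} (equivalently Lemma~\ref{bitcomplexity}) that the search based method of Algorithm~\ref{alg:gauge} processes a dataset of $N$ samples in time $\mathcal{O}(N \cdot d^2 \cdot 2^{d^2})$, and from Theorem~\ref{thm:rank} (equivalently Lemma~\ref{rank}) that the rank based method of Algorithm~\ref{alg:rankgauge} does the same in time $\mathcal{O}(N^2 \cdot d^4)$. These two bounds are the only ingredients needed.

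By definition, the rank based method is faster than the search based method exactly when its running time is no larger than that of the search based method. Writing the two bounds with their hidden constants as $c_1 \cdot N \cdot d^2 \cdot 2^{d^2}$ and $c_2 \cdot N^2 \cdot d^4$ respectively, this condition reads $c_2 \cdot N^2 \cdot d^4 \leq c_1 \cdot N \cdot d^2 \cdot 2^{d^2}$. Dividing through by $c_1$ and setting $c = c_2 / c_1$ gives precisely the claimed inequality $N \cdot d^2 \cdot 2^{d^2} \geq c \cdot (N^2 \cdot d^4)$, which completes the argument.

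For interpretability I would then note that cancelling the common factor $N \cdot d^2$ from both sides reduces the criterion to $2^{d^2} \geq c \cdot N \cdot d^2$, which makes transparent that the rank based approach wins whenever the exponential term $2^{d^2}$ dominates the (polynomial-in-$d$, linear-in-$N$) cost of the rank comparisons; equivalently, for a fixed code distance the rank based method is preferable only while $N$ stays below roughly $2^{d^2} / (c \cdot d^2)$.

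There is no real obstacle here, only one point that I would state explicitly to avoid ambiguity: the constant $c$ is not arbitrary but equals the ratio of the two implementation-dependent constants hidden inside the respective $\mathcal{O}(\cdot)$ expressions. Because the corollary is an asymptotic, leading-order statement, $c$ absorbs these constants, and the crossover it predicts is therefore exact only up to that factor. I would flag this so that the reader does not mistake $c$ for a universal numerical constant.
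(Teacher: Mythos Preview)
Your proposal is correct and follows essentially the same approach as the paper: recall the two complexity bounds from Lemmata~\ref{bitcomplexity} and~\ref{rank}, introduce the hidden constants $c_1,c_2$, and set $c=c_2/c_1$ to obtain the stated inequality. The additional simplification $2^{d^2}\ge c\cdot N\cdot d^2$ and the remark on the nature of $c$ are not in the paper's proof but are harmless and arguably clarifying.
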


Proof:
According to Lemma \ref{bitcomplexity}, given a bit flip error $e$, its corresponding gauge equivalent minima $e_{min} \equiv e$ can be computed, according to Algorithm \ref{alg:gauge}, in $\mathcal{O}(d^2 \cdot 2^{d^2})$, where $d$ is the distance of the code. Hence the total time required to find the gauge equivalence for the entire training dataset consisting of $N$ samples is $\mathcal{O}(N \cdot d^2 \cdot 2^{d^2})$. According to Lemma \ref{rank} given a bit flip error $e$, its corresponding rank based gauge equivalent minima $e_{min} \equiv e$ can be computed by Algorithm \ref{alg:rankgauge} in $\mathcal{O}(N^2 \cdot d^4)$ where $d$ is the distance of the code and entire dataset consists of $N$ errors. Hence, it is gainful to employ rank based gauge equivalence only when
\begin{eqnarray}
    \mathcal{O}(N \cdot d^2 \cdot 2^{d^2}) &\geq& \mathcal{O}(N^2 \cdot d^4) \nonumber \\
    \Rightarrow c_1 \cdot N \cdot d^2 \cdot 2^{d^2} &\geq& c_2 N^2 \cdot d^4 \nonumber \\
    \Rightarrow  2^{d^2} &\geq& c (N \cdot d^2) \nonumber
\end{eqnarray}
where $c = \frac{c_2}{c_1}$ is a constant.


We have empirically tested that for $d>3$, the rank based equivalence method is faster than its search based counterpart for bit flip error. 


The two algorithms based on linear search and  rank to determine gauge equivalence class representatives, are applicable for both bit flip and phase flip errors for any gauge code. However, for the heavy hexagonal code, we exploited the gauge structure to design a better algorithm to find gauge equivalence for phase flip errors. Similarly, structure of other codes may be exploited to design better algorithms to find gauge equivalence for bit and/or phase flip errors.




\section{Simulation Results}
\label{results}
 We have implemented and tested our ML based decoders for heavy hexagonal code \cite{chamberland2020topological} with bit flip, phase flip, and depolarization noise models. The machine learning parameters used for our decoder are given in Section 5.1. 

To the best of our knowledge, only MWPM based decoders have been studied for the heavy hexagonal QECC. Hence we have compared the metrics of our ML based decoder with those for MWPM decoder. We have used the stabiliser structure of the MWPM decoder proposed in  \cite{chamberland2020topological}, and implemented it using the `PyMatching' library \cite{higgott2022pymatching}. The threshold obtained by this implementation of MWPM decoder is the same as that in \cite{chamberland2020topological} (vide Fig.~\ref{fig:result3MWPMvsML}). Although the original paper on heavy hexagonal code used only MWPM decoder, for the sake of completeness, we have shown the threshold and pseudo-threshold values for Union Find decoder as well. We have used the Plaquette library \cite{plaquette} for the implementation of Union Find decoder.


We have executed  our codes for our ML decoders along with our search  and rank based algorithms to find the gauge equivalence representatives, on a server of National Energy Research Scientific Computing (NERSC) which has the computing resource Perlmutter, a Cray EX system with AMD EPYC CPUs and NVIDIA A100 GPUs.

\subsection{Machine Learning Parameters}

Our ML model has been trained in batches of data instead of training it with the entire data set (of size $10^6$) at once. This is advantageous in terms of both training time and memory capacity. For a distance $d$ heavy hexagonal code, there are $d^2$ data qubits and $d^2-1$ measure qubits (involving stabilizers and flag qubits). The data set is generated as follows. For each qubit (both data and measure), we apply an error with probability $p_{phys}$. If the noise model is bit (phase) flip, then a $X$ ($Z$) operator is applied on the particular qubit with probability $p_{phys}$. For depolarization noise model, the error operator is a linear combination of $X$, $Y$ or $Z$, each with probability $\frac{p_{phys}}{3}$. Finally, once the syndrome of the qubit error is generated (which is also noisy since the measure qubits may be erroneous as well), we apply measurement error by changing each bit of the syndrome with probability $p_{meas}$. According to \cite{fowler2012towards, chamberland2020topological}, we assume $p_{meas} = p_{phys}$. Under the assumption that errors are not correlated, this model captures a majority of the noise in real devices and has also been considered in prior research \cite{fowler2012towards, chamberland2020topological}. Thus our  dataset is a subset of all possible errors, which is likely to be a good representative of the actual noise in the hardware since it is generated in a manner similar to  the assumption for the errors to occur in the quantum hardware \cite{fowler2012towards, chamberland2020topological}.

The dataset size that we have used is 100000 from which 70000 is used for training and the rest for testing purpose. For evaluating the model we have used 10 fold cross validation. This approach involves randomly dividing the set of observations into 10 groups, or folds, of approximately equal size. The first fold is treated as a validation set, and the method is fit on the remaining 9 folds. 

The batch size of our simulation is 10000, run for 1000 epochs, with a learning rate of 0.01 (using stochastic gradient descent). The results have been averaged over five trials. This method is repeated for each value of the physical error probability ($p_{phys}$) considered in this study.  

Our FFNN has four layers, namely an input layer, an output layer and two hidden layers. For bit flip errors, in the input layer there are  $(d^2 -1)/2$ nodes which is the size of the syndrome (equals the number of $Z$ stabilizers). For phase flip errors, in the input layer there are  $d$ nodes which is the size of the syndrome (equals the number of $X$ stabilizers). The number of nodes in the output layer is always $d^2$, the number of data qubits for distance $d$ of the heavy hexagonal code. The loss function is chosen as binary cross-entropy. We have used the Keras library for creating the FFNN. Table \ref{tab:ffnntable} includes the details of our FFNN.

\begin{table}[htb]
    \centering
  
    \caption{ Number of nodes in the layers  of our FFNN based heavy hexagonal QECC syndrome decoder\\ for bit flip and phase flip errors }
    \begin{tabular}{|c|c|c|c|}
        \hline

          Layer & Code distance $d$ & \# nodes for bit flip errors & \# nodes for phase flip errors\\
         \hline
         &  3 & 4  & 3\\
        Input &  5 & 12 & 5\\
         &   7 &  24  & 7\\
         \hline
         & 3 & 9  & 9\\
          Output & 5 & 25  & 25\\
         & 7 & 49  & 49\\
         \hline
            & 3 & 16  & 16\\
      First hidden & 5 & 32 & 32\\
         & 7 & 64  & 64\\
         \hline
            & 3 & 16 & 16\\
        Second hidden & 5 & 32 &  32\\
         & 7 & 64  & 64\\
         \hline

    \end{tabular}
    
    \label{tab:ffnntable}
\end{table}

\newpage
\subsection{Estimating Logical Error Rate with our Machine Learning Decoder} 
The input of the FFNN based decoder is the syndrome or the list of ancilla measurements. The output of the decoder is the list of the indices of the data qubits for which error is detected. 
Suppose the size of the testing data is $N_t$. For each testing data $(S_i, e_i)$, where $S_i$ is the syndrome for the error $e_i$, the FFNN decoder is provided with the syndrome and it suggests a probable error $e_{i_{FFNN}}$. An error being a Pauli operator, is self-adjoint, and its correction is by simply applying the same operator once more. Therefore, the effective error on the system after performing correction based on the suggestion of the FFNN decoder is $e_{i_{FFNN}}.e_i$. Note that when $e_{i_{FFNN}} = e_i$ it implies perfect correction. We can verify easily whether $e_{i_{FFNN}}.e_i$ is a logical error. We want to emphasize here that the decoder may predict error incorrectly but the result may still not be a logical error, i.e., $e_{i_{FFNN}} \neq e_i$, whereas $e_{i_{FFNN}}.e_i$ is not a logical error. In accordance with \cite{chamberland2020topological}, we report here only the probability of logical errors identified as above. If this count of logical errors is $\nu_{logical}$, then our estimate for the logical rate or error probability $p_{logical}$ is reported as $\nu_{logical}/N_t$.

\subsection{Comparison of ML-based decoder results with MWPM}
\subsubsection{Bit flip error on data qubit}

First, we show the decoding performance of our FFNN-based decoder for heavy hexagonal codes of distances 3, 5, and 7 for bit flip noise model on data qubit (assuming ideal stabilizers and measurements). 

Table~\ref{tab:result1MWPMvsML} shows the values of threshold and pseudo-threshold for UF, MWPM and our proposed ML decoders for heavy hexagonal QECC with distances 3, 5, and 7. We note that the ML decoder clearly outperforms MWPM, which in its turn, outperforms UF. Therefore, henceforth, all comparisons of our ML decoder is performed with MWPM.
 
\begin{table}[htb]
    \caption{Comparison of UF, MWPM and FFNN based decoder result in case of bit flip error\\ (assuming ideal stabilizers and measurements)}
    \centering
    \begin{tabular}{|c|c|c|c|}
        \hline
        Decoding method & Code distance $d$ & Threshold &  Pseudo Threshold\\
      
        \hline
         & 3 &   & 0.0001\\
         UF & 5 & 0.0038 & 0.0008\\
         & 7 &  & 0.0018\\
      
         \hline
         & 3 &   & 0.0002\\
         MWPM & 5 & 0.0042 & 0.0012\\
         & 7 &  & 0.0024\\
         \hline
         & 3 &   & 0.006\\
         FFNN & 5 & 0.015 & 0.0086\\
         & 7 &  & 0.0115\\
         \hline

    \end{tabular}

    \label{tab:result1MWPMvsML}
\end{table}

\begin{figure}[htb]
    \centering
    \includegraphics[scale=0.4]{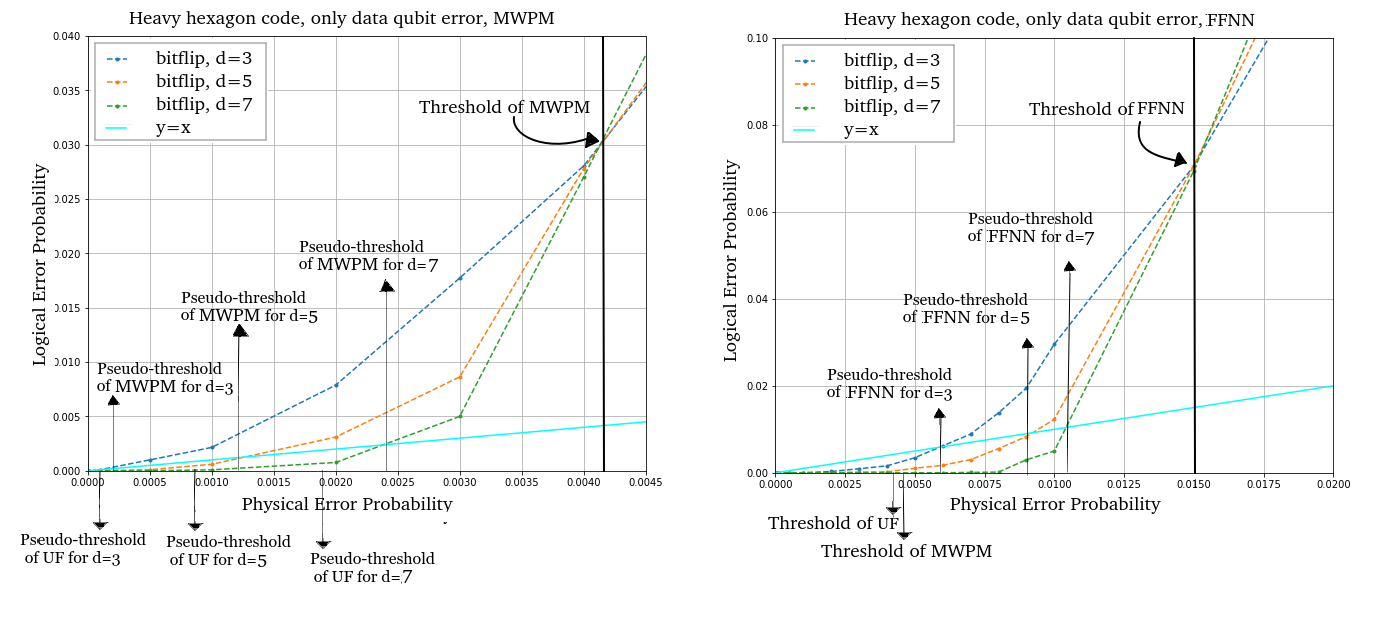}
    \caption{Threshold and pseudo-threshold values for MWPM  and ML based decoders in case of distance 3, 5, and 7 heavy hexagonal code for bit flip error on data qubit}
    \label{fig:result1MWPMvsML}
\end{figure}

Fig.~\ref{fig:result1MWPMvsML} shows the change in the probability of logical error with physical error probability $p$, which is the per-step error probability in the heavy hexagonal code cycle, when the error occurs only on data qubits. The results of MWPM and FFNN-based decoder for bit flip noise models on data qubits are shown. In the figure, the blue, orange, and green lines respectively are the plots for distances 3, 5, and 7. The points where the blue, orange and green line intersect the cyan straight line indicate the pseudo-threshold for distances 3, 5 and 7 respectively. 

Naturally, with increase in the distance of the heavy hexagonal code, the pseudo threshold improves. However, threshold is a property of the error correcting code and the noise model only. It is independent of the distance. In Fig. 6,  we observe that our ML based decoders are working much better than MWPM decoders for heavy hexagonal QECC in terms of threshold. 
 
\subsubsection{Performance of FFNN based decoder result without and with gauge equivalence for bit flip error on data qubit along with measurement and stabilizer error}
 
We show the decoding performance of our FFNN-based  decoder for heavy hexagonal codes of distances 3, 5, and 7 for bit flip noise model on data qubit along with measurement, and stabilizer error for three cases, i.e. without gauge equivalence, with search-based gauge equivalence, and with rank-based gauge equivalence. Our FFNN based model outperforms the existing MWPM decoder. We also show that the performance of FFNN with gauge equivalence is better than that of FFNN without gauge equivalence, and also the performance of search based  and rank based gauge equivalence is more or less equal. We then present the comparison of decoding time (in seconds) needed for distance 3, 5 and 7 heavy hexagonal code in case of FFNN based decoder for three cases, namely  without gauge equivalence, with search based gauge equivalence and with rank based gauge equivalence.  This supports the fact that rank based gauge equivalence method is faster than search based gauge equivalence  (Fig \ref{fig:timecompgauge}). Although the time needed for the basic FFNN decoder is much less than that for the search based gauge or rank based gauge equivalence methods but the decoding performance is poorer as already  mentioned above.
 
\begin{figure}[htb]
    \centering
    \includegraphics[scale=0.4]{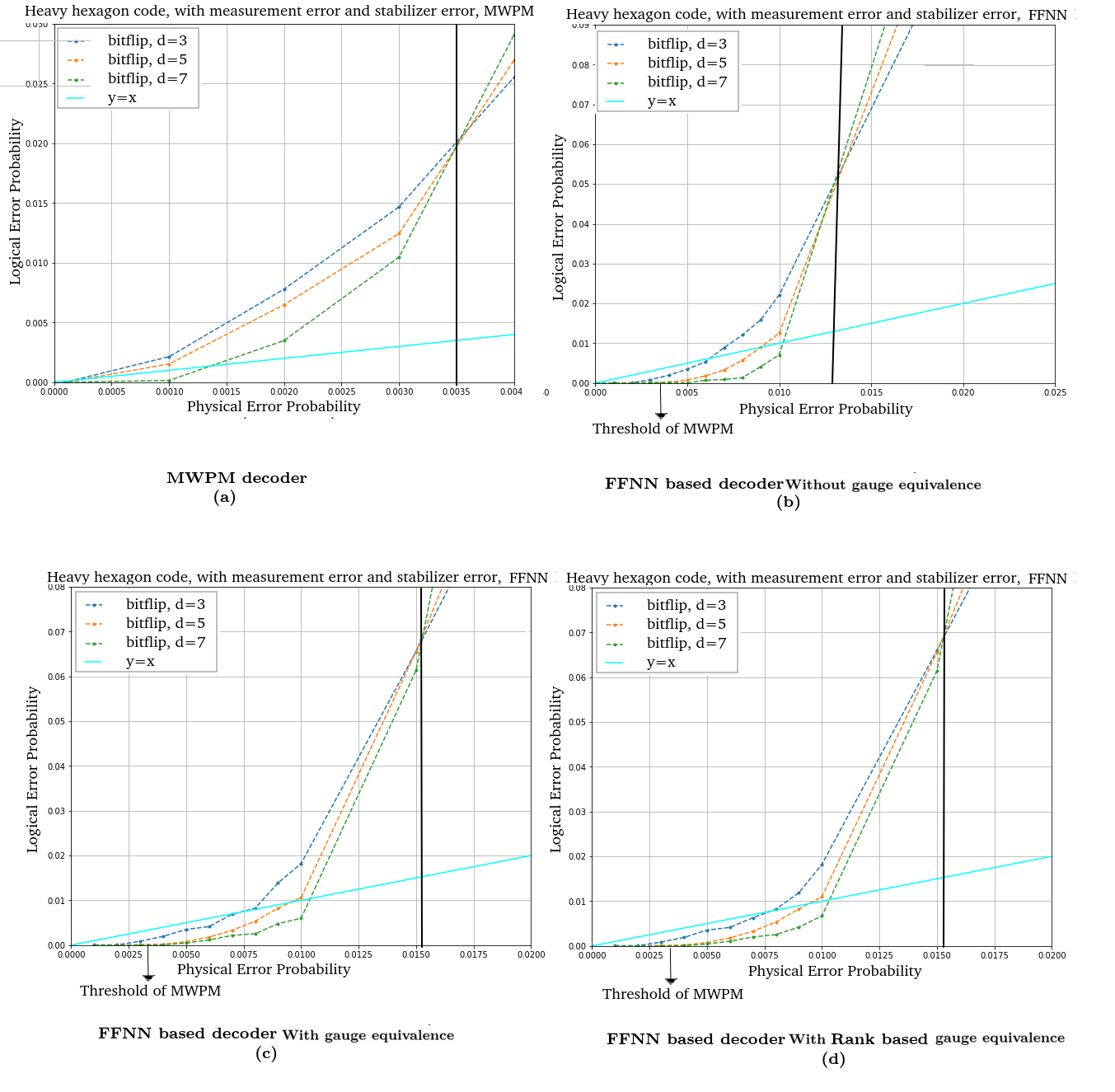}
    \caption{Threshold and pseudo-threshold values for MWPM and ML based decoders (without and with gauge equivalence) and  in case of distance 3, 5, and 7 heavy hexagonal code for bit flip error on data qubit along with measurement, and stabilizer error}
    \label{fig:result2MWPMvsML}
\end{figure}

Fig.~\ref{fig:result2MWPMvsML} shows the variation in the logical error probability with physical error probability $p$ with and without the application of gauge equivalence.

\begin{figure}[htb]
    \centering
    \includegraphics[scale=0.35]{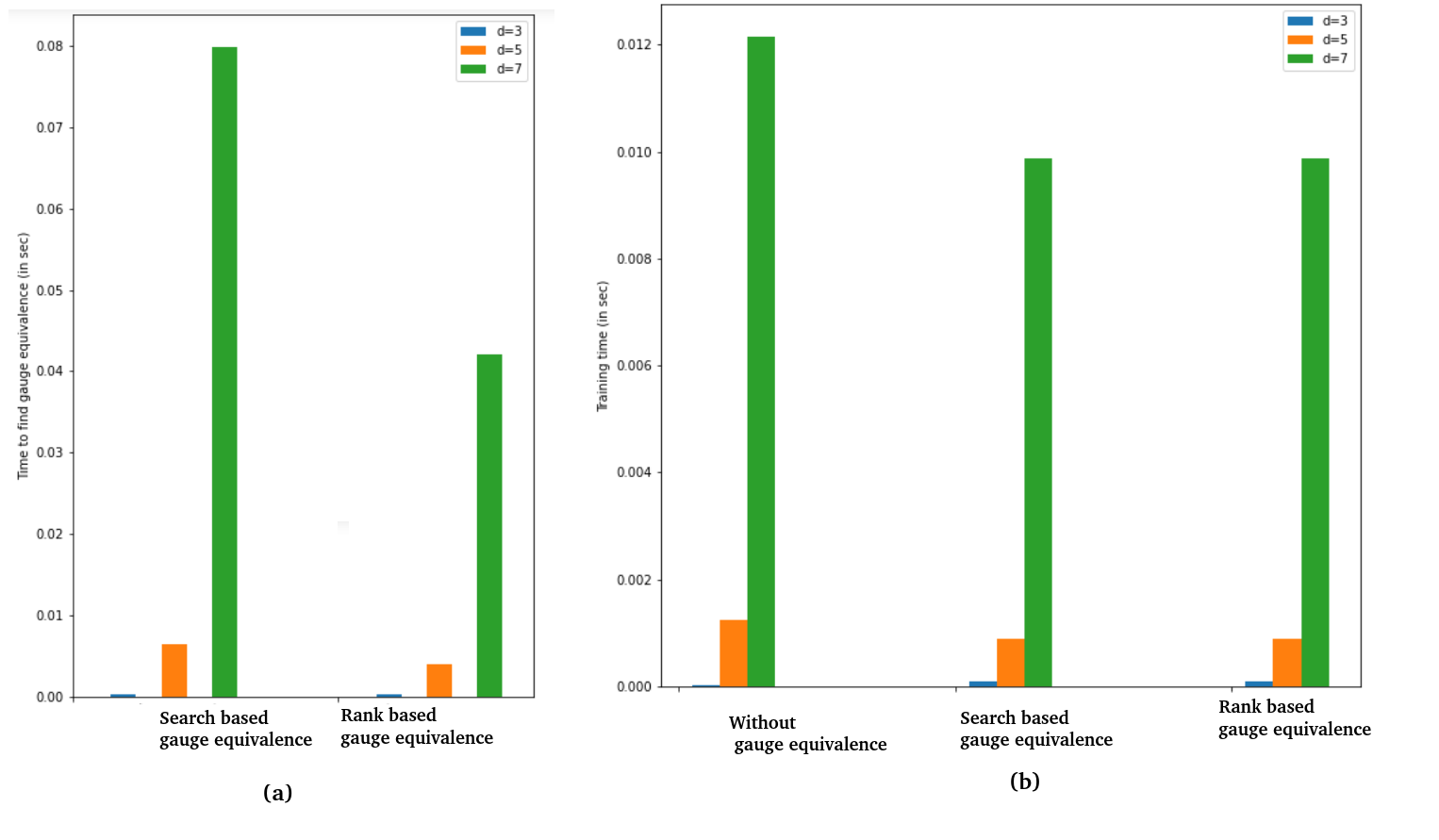}
    \caption{ Comparison of time (in seconds) needed to (a) pre-process and find gauge equivalence based on linear search  and on  rank for distance 3 (blue), distance 5 (orange)  and 7 (green) heavy hexagonal code, (b) train the FFNN based decoder for the three cases, namely without gauge equivalence, with search based gauge equivalence  and with rank based gauge equivalence}
    \label{fig:timecompgauge}
\end{figure}

\begin{table}[htb]
    \caption{ Comparison of FFNN based decoder, result without and with gauge equivalence in case of bit flip error on data qubit along with measurement, and stabilizer error  }
    \centering
    \begin{tabular}{|c|c|c|c|}
        \hline
         Decoding method & Code distance d & Threshold &   Pseudo Threshold\\
         \hline
         & 3 &  & 0.00018\\
         MWPM &5 & 0.0035  & 0.00023\\
         & 7 &  & 0.0013\\
         \hline
         & 3 &  & 0.0055\\
         FFNN without gauge equivalence & 5 & 0.01375 &  0.008\\
         & 7 &  & 0.011\\
         \hline
         & 3 &  & 0.0081\\
         FFNN with search based gauge equivalence & 5 & 0.01586  & 0.0102\\
         & 7 & & 0.0112\\
         \hline
         & 3 &  & 0.0082\\
         FFNN With rank based gauge equivalence & 5 & 0.01587  & 0.0103\\
         & 7 &   & 0.0112\\
         \hline

    \end{tabular}

    \label{tab:result2MWPMvsML}
\end{table}

In Table~\ref{tab:result2MWPMvsML} we note the threshold value for MWPM and FFNN based decoders (With and Without Using Gauge equivalence) and pseudo-threshold values for both decoders in case of distance 3, 5, and 7, for this same noise model.  Here we observe  $\sim 1.15\times $ (that is about $\sim 14\%$) increase in the threshold for ML-decoders using gauge equivalence as compared to  ML-decoders without using gauge equivalence. 

In Fig~\ref{fig:timecompgauge}(a) we compare the time to find the gauge equivalence. which is a preprocessing step before training the ML decoder. We observe that the rank  based method is faster than the search based method, which supports our Corollary 16 in Section 4.3.

In Fig~\ref{fig:timecompgauge}(b) we have shown the comparison of training time (in seconds) needed for distance 3, 5  and 7 heavy hexagonal code by our FFNN based decoder for three cases, namely without gauge equivalence, with search based gauge
equivalence and with rank based gauge equivalence. Decoding using gauge equivalence for both search  and rank based methods, takes less time than without gauge equivalence. The gauge equivalence technique reduces the number of error classes, resulting in a classification problem with fewer classes, and thus the training of ML model becomes not only faster but also more accurate.
For all the three cases, the time required for $d=7$ (green bar) is significantly higher than that for $d=5$ (orange bar) or $d=3$ (blue bar). 

For an instance of a decoding problem for $d=3$ heavy hexagon QECC using bit flip noise model,  while the decoding time for MWPM decoder is approx $6 \times 10^{-6} $ seconds, it is approx $3.8 \times 10^{-6} $ seconds for our FFNN based decoder using rank based gauge equivalence. This supports our intuition that ML based decoding is faster than MWPM based decoder for heavy hexagonal QECC.

\subsubsection{Performance of FFNN based decoder for phase flip error}  Fig. \ref{fig:phasegraph} depicts the performance of our FFNN-based decoder for heavy hexagonal code having distances 3, 5, and 7, using search based gauge equivalence for phase flip error on data qubit along with measurement and stabilizer errors. We observe that the probability of physical error above which the probability of logical error increases with the distance of the QECC, is $0.0118$ in the case of our FFNN based decoder, i.e., $\sim 19\times $ greater than that of MWPM based decoder, for which it is $0.00063$. For decoding phase flip errors in heavy hexagonal code, we have primarily made use of the Bacon-Shor code because as stated earlier heavy hexagonal code is a combination of surface code and subsystem code (Bacon Shor code) \cite{chamberland2018deep}. It may be pointed out that the definition of threshold as given in Section 1 is not applicable to Bacon-Shor codes \cite{bacon2006operator}.

\begin{figure}[htb]
    \centering
    \includegraphics[scale=0.3]{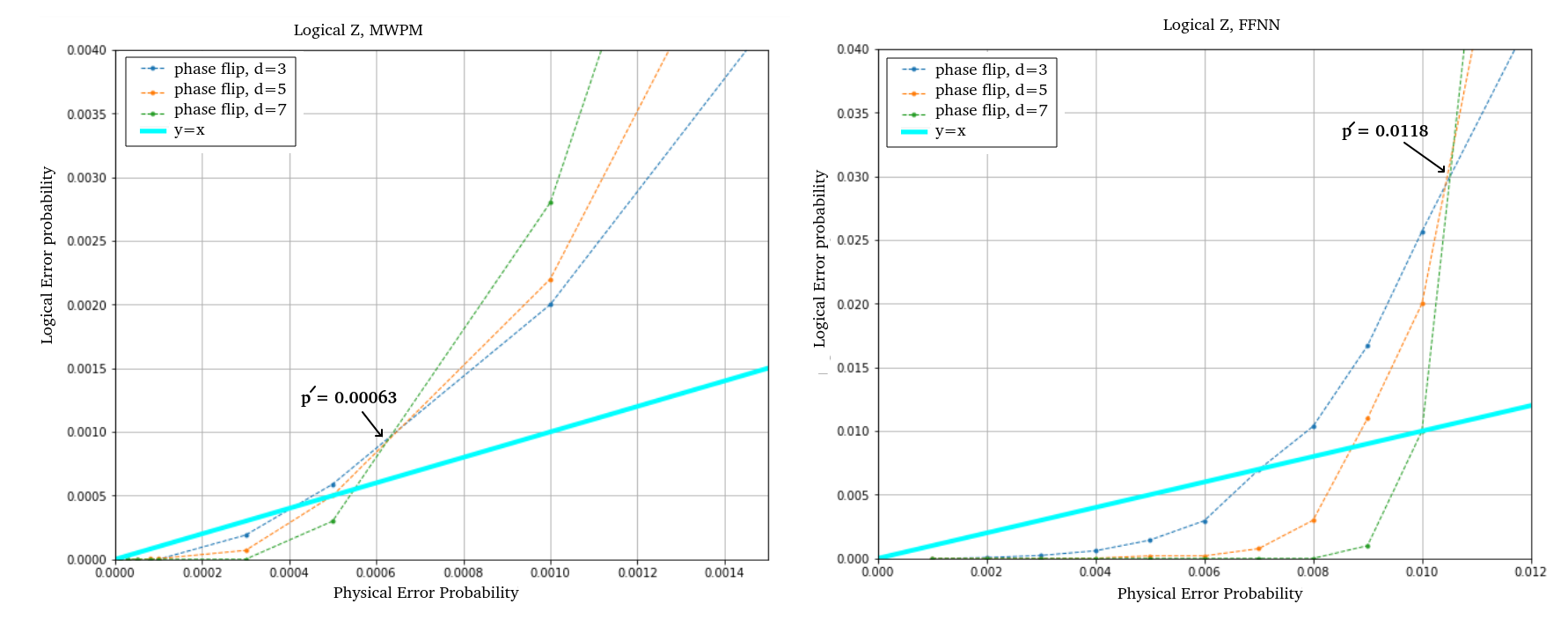}
    \caption{  Phase flip error rates for the heavy hexagonal code (a) In MWPM based decoder (b) In FFNN based decoder with gauge equivalence. }
    \label{fig:phasegraph}
\end{figure}

\subsubsection{Performance of FFNN based decoder for depolarization noise}

Lastly, we present the performance of our FFNN-based  decoder for heavy hexagonal codes of distances 3, 5, and 7 in the case of  depolarizing noise in terms of logical $X$ (bit flip) and logical $Z$ (phase flip) errors in Table~\ref{tab:depol}. Our model outperforms the existing decoders. 

\begin{table}[H]
 \caption{Comparison of FFNN based decoder with MWPM decoder for logical $X$  in depolarization noise model}
    \centering
    \begin{tabular}{|c|c|c|c|}
        \hline
         Decoding method & Code distance $d$  &  Threshold &  Pseudo-threshold\\
         \hline
         & 3 &   & 0.0005\\
         MWPM  & 5 & 0.0045 & 0.002\\
         & 7 &   & 0.0032\\
         \hline
         & 3 &  & 0.0105\\
         FFNN  ( with gauge equivalence) & 5 & 0.0245  & 0.0125\\
         & 7 &   & 0.0207\\
         \hline
    \end{tabular}
    \label{tab:depol}
\end{table}

Hence for depolarization noise model,  the threshold of our  FFNN based decoder is 0.0245 in the case of heavy hexagonal code,  which is much higher than that of MWPM, i.e., 0.0045 \cite{chamberland2020topological}. Hence our proposed ML-based decoding method achieves $\sim 5\times$ higher values of threshold than that by
MWPM. Further,  ML based decoders have better (higher) pseudo threshold  that MWPM.
Note: For depolarization noise model, the threshold for  FFNN based decoding was 0.03 with surface code \cite{bhoumik2022ml}. 

\begin{figure}[htb]
    \centering
    \includegraphics[scale=0.4]{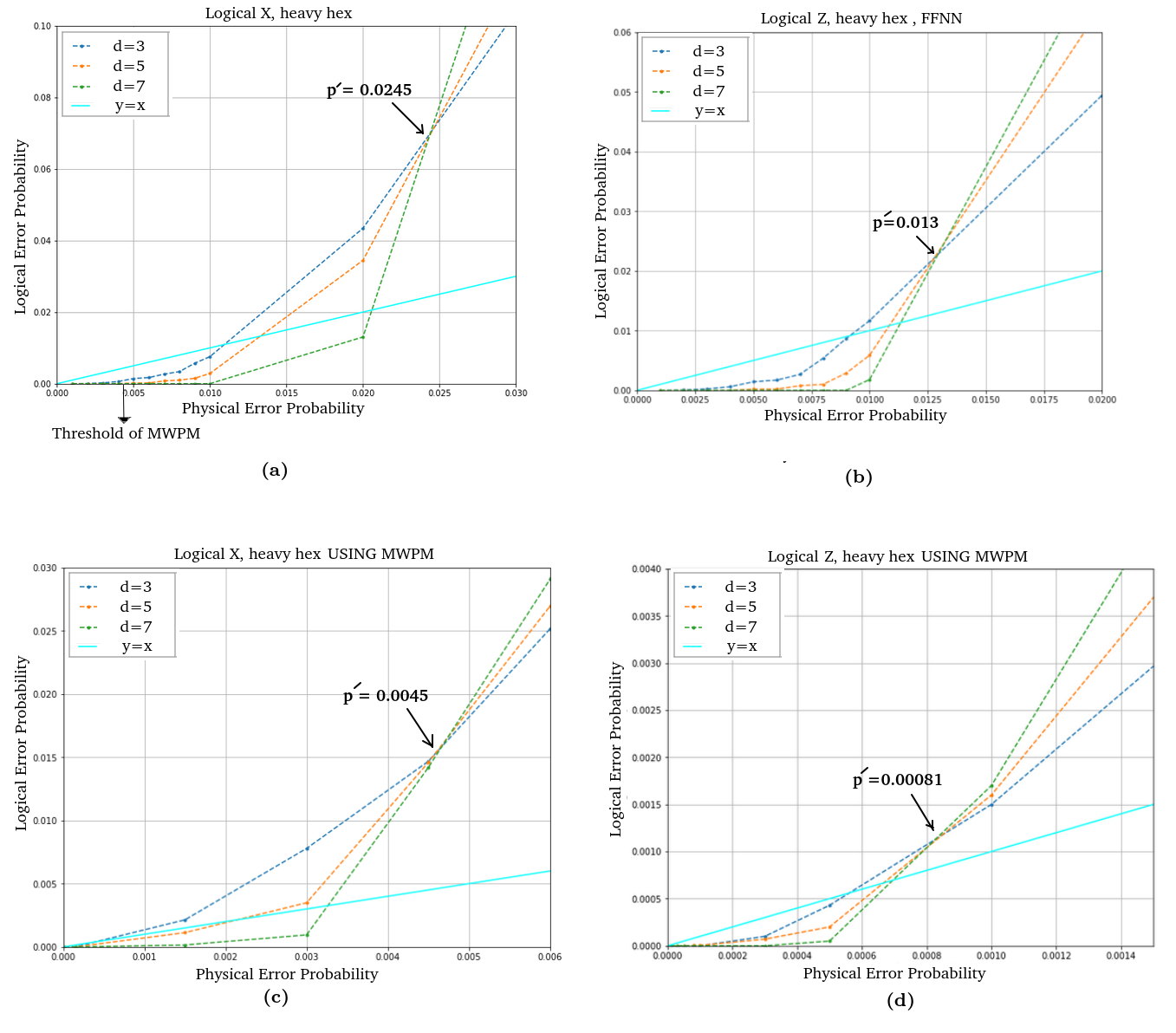}
    \caption{In FFNN based decoder for depolarizing noise model (a) Logical $X$ error rates and (b) logical $Z$ error rates for the heavy hexagonal code. In MWPM based decoder (c) Logical $X$ error rates and (d) logical $Z$ error rates for the heavy hexagonal code. }
    \label{fig:result3MWPMvsML}
\end{figure}

\subsection{Scalability of our ML based decoder}

A supervised ML model consists of a training and a testing phase. It is the training phase which often has a significant time requirement. For training a neural network having four layers with  $i$, $j$, $k$ and $l$ nodes respectively, $N$ training samples, and $n$ epochs, the worst case time complexity  is $\mathcal{O}(nN.(ij+jk+kl))$ \cite{alpaydin2021machine}. The number of nodes in each layer scales linearly with the number of qubits, which is $\mathcal{O}(d^2)$, $d$ being the distance of the QECC. The number of nodes for each layer is bounded by $\mathcal{O}(d^2)$. Therefore the time complexity for training is $\mathcal{O}(nN.d^4)$. While the training time of the ML model increases polynomially with the distance for a fixed-size set of training samples, it  increases linearly with the size of the training set for a fixed distance.

Once the training is performed, decoding of each syndrome is essentially executing the ML model once with that syndrome as the input. The time complexity for decoding each syndrome is, thus, $\mathcal{O}(ij+jk+kl) = \mathcal{O}(d^4)$.

Training of an ML model can be considered as a preprocessing step. With increase in distance of the code, it is necessary to increase the number of training samples $N$ in order to have a fair training. Since training can become expensive for higher distance codes, it may be possible to use a divide-and-conquer method to divide the lattice into multiple smaller (and possibly overlapping) sublattices so that each of them can be trained efficiently. However, this method has the challenge of \emph{knitting} the results from these sublattices into the final result corresponding to the lattice. We plan to explore this in future.

For graph based QECC decoders such as MWPM and UF, let $G = (V,E)$ be the syndrome graph, where $|V| = \mathcal{O}(d^2)$. The worst case time complexity of MWPM decoder is $\mathcal{O}(|V|^4) = \mathcal{O}(d^8)$ \cite{8880492}. Thus both the decoding time and the performance of MWPM is inferior to that of our proposed ML decoder. On the other hand, UF decoder scales almost linearly with the number of vertices on the graph as $\mathcal{O}(d^2\cdot \alpha(n))$ \cite{delfosse2021almost} where $\alpha(n)$ is the Ackermann's function. While this is lower than our ML decoder, it is to be noted that Union Find shows a significantly poor performance compared to both MWPM and our ML decoder in terms of error threshold which is the most important factor in QECC error syndrome decoding.

Therefore, the decoding time of our proposed ML decoder scales polynomially with the distance of the QECC, and performs significantly better  than both MWPM and UF decoder overall.

\section{Discussion}

In this paper, we have proposed an ML-decoder for heavy hexagonal code which makes use of a novel technique based on gauge equivalence to improve the performance of decoding. First we map decoding to a ML based classification problem. Exploiting the properties of heavy hexagonal code as a subsystem code, we have defined gauge equivalence, which, in turn, reduces the number of error classes for ML based classification. We have proposed \emph{search based} and \emph{rank based} methods for finding the gauge equivalence; the former being faster for finding equivalence in phase flip errors and the later for bit flip errors. 

We have tested our decoders for heavy hexagonal codes with distances upto 7 . We have shown that even the na\"ive ML decoder outperforms the MWPM and UF based decoders in terms of pseudo-threshold and threshold. Using gauge equivalence leads to further improvement in the  performance of our decoder. This research provides a plausible method for scaling current quantum devices to the fault-tolerant era. Future directions may be to study the applicability of gauge equivalence and ML based decoders for other noise models such as Pauli and amplitude damping noise.



\section*{Acknowledgments}
Debasmita Bhoumik would like to acknowledge fruitful discussions with Dr. Anupama Ray of IBM Research India. This research used resources of the National Energy Research Scientific Computing Center (NERSC), a U.S. Department of Energy Office of Science User Facility located at Lawrence Berkeley National Laboratory, operated under Contract No. DE-AC02-05CH11231 using NERSC award ERCAP0022238.

\appendix

\section{Appendix}

\subsection*{Enumerating Gauge Generators of Heavy Hexagonal QECC for $d=5$}
The 20 $Z$ gauge generators of the type $ Z_{i,j} Z_{i+1,j}$ in Fig.~\ref{fig:d5hexgauge} are given in Table \ref{table:appendix1}. The 8 $X$ gauge generators of the type $X_{i,j}X_{i,j+1}X_{i+1,j} X_{i+1,j+1}$  (represented by red squares in Fig.~\ref{fig:d5hexgauge}) are listed in Table \ref{table:appendix2}. 
The 2 $X$ gauge generators of the type $X_{1,2m-1} X_{1,2m}$ and 2 $X$ gauge generators of the type $X_{d,2m} X_{d,2m+1}$ (represented by red semicircles in Fig.~\ref{fig:d5hexgauge}) are  shown in Table \ref{table:appendix3}. Hence, there are a total of 12 $X$ gauge generators  in  Fig.~\ref{fig:d5hexgauge}.
In this appendix, we list all the gauge generators for a distance  heavy hexagonal code.  Here,
$i\in\{1,~2,~3,~4\}$, $j \in\{1,~2,~3,~4, ~5\}$ and $m \in \{1,~2\}$.
\begin{center}
\begin{table}[h!]
\caption{The 20 $Z$ gauge generators $Z_{i,j}Z_{i+1,j}$ with weight 2}
\begin{tabular}{ |c|c| } 
 \hline
 Gauge Generator & Position in lattice (blue semicircle) \\
  \hline
$Z_{1,1} Z_{2,1}$ & first row first column (top left)  \\
$Z_{2,1} Z_{3,1}$ & second row first column \\
... & \\ 
... & \\ 
$Z_{4,5} Z_{5,5}$ & last row last column\\ 
 \hline
\end{tabular}

\label{table:appendix1}
\end{table}
\end{center}

   
    

\begin{center}
\begin{table}[h!]
\caption{The 8 $X$ gauge generators $X_{i,j}X_{i,j+1}X_{i+1,j} X_{i+1,j+1}$ with weight 4}
\begin{tabular}{ |c|c| } 
 \hline
 Gauge Generator & Position in lattice ( red square) \\
  \hline
$X_{1,2}X_{1,3}X_{2,2} X_{2,3} $ &first row second column \\
$X_{1,4}X_{1,5}X_{2,4} X_{2,5} $ & first row fourth column\\
$X_{2,1}X_{2,2}X_{3,1} X_{3,2} $ & second row second column\\
$X_{2,3}X_{2,4}X_{3,3} X_{3,4} $ & second row fourth column\\
$X_{3,2}X_{3,3}X_{4,2} X_{4,3} $ & third row second column  \\ 
$X_{3,4}X_{3,5}X_{4,4} X_{4,5} $ & third row fourth column  \\ 
$X_{4,1}X_{4,2}X_{5,1} X_{5,2} $ & fourth row second column \\ 
$X_{4,3}X_{4,4}X_{5,3} X_{5,4} $ & fourth row fourth column\\
 \hline
\end{tabular}
\label{table:appendix2}
\end{table}
\end{center}


\begin{center}
\begin{table}[h!]
 \caption{The 4 $X$ gauge generators $X_{1,2m-1} X_{1,2m}$ and $~ X_{d,2m} X_{d,2m+1} $ with weight 2}
\begin{tabular}{ |c|c| } 
 \hline
 Gauge Generator & Position in lattice (red semicircle) \\
  \hline
$X_{1,1} X_{1,2}$ & first row first column (top left)  \\
$X_{1,3}X_{1,4}$ & first row third column \\

$X_{5,2}X_{5,3}$ & fifth row second column \\ 

$X_{5,4}X_{5,5}$ & fifth row fifth column \\ 

 \hline
\end{tabular}
\label{table:appendix3}
\end{table}
\end{center}
    


    

\subsection*{Enumerating Stabilizers of Heavy Hexagonal QECC for $d=5$}
The 8 $Z$ stabilizers of the type $ Z_{i,j} Z_{i,j+1} Z_{i+1,j} Z_{i+1,j+1}$ in Fig.~\ref{fig:d5hexgauge} are  shown in Table \ref{table:appendix4}. The 2 $Z$ stabilizers of the type  $Z_{2m-1,d} Z_{2m,d}$ and 2 $Z$ stabilizers of the type $Z_{2m,1} Z_{2m+1,1}$  (represented by blue semicircles in Fig.~\ref{fig:d5hexgauge}) are  given in Table \ref{table:appendix5}. Hence, there are a total of 12 $Z$ stabilizers in Fig.~\ref{fig:d5hexgauge}. Next, the 4 $X$ stabilizers of the type $\Pi_i X_{i,j} X_{i,j+1}$ in Fig.~\ref{fig:d5hexgauge} 
are shown in Table \ref{table:appendix6}.

\begin{center}
\begin{table}[h!]
\caption{The 8 $Z$ stabilizers $Z_{i,j} Z_{i,j+1} Z_{i+1,j} Z_{i+1,j+1}$ with weight 4}
\begin{tabular}{ |c|c| } 
 \hline
Stabilizer & Position in lattice (white square) \\
  \hline
$Z_{1,1} Z_{1,2} Z_{2,1} Z_{2,2}$ & first row first column \\
$Z_{1,3} Z_{1,4} Z_{2,3} Z_{2,4}$ & first row third column\\
$Z_{2,2} Z_{2,3} Z_{3,2} Z_{3,3}$ & second row second column   \\
$Z_{2,4} Z_{2,5} Z_{3,4} Z_{3,5}$ & second row fourth column  \\
$Z_{3,1} Z_{3,2} Z_{4,1} Z_{4,2}$ & third row first column  \\
$Z_{3,3} Z_{3,4} Z_{4,3} Z_{4,4}$ & third row third column  \\
$Z_{4,2} Z_{4,3} Z_{5,2} Z_{5,3}$ & fourth row second column \\
$Z_{4,4} Z_{4,5} Z_{5,4} Z_{5,5}$ & fourth row fourth column \\

 \hline
\end{tabular}
\label{table:appendix4}
\end{table}
\end{center}
     

\begin{center}
\begin{table}[h!]
 \caption{The 4 $Z$ stabilizers $~Z_{2m-1,d} Z_{2m,d}$ and $Z_{2m,1} Z_{2m+1,1}$ with weight 2}
\begin{tabular}{ |c|c| } 
 \hline
 Stabilizer & Position in lattice (blue semicircle) \\
  \hline
$Z_{1,5} Z_{2,5}$ & first row fifth column \\
$Z_{3,5} Z_{4,5}$ & third row fifth column \\ 
$Z_{2,1} Z_{3,1}$ & second row first column \\ 
$Z_{4,1} Z_{5,1}$ & fourth row first column \\ 

 \hline
\end{tabular}
\label{table:appendix5}
\end{table}
\end{center}







\begin{center}
\begin{table}[h!]
 \caption{The 4 $X$ stabilizers $\Pi_i X_{i,j} X_{i,j+1}$ }
\begin{tabular}{ |c|c| } 
 \hline
 Stabilizer & Position in lattice (vertical strip) \\
  \hline
$X_{1,1} X_{1,2} X_{2,1} X_{2,2} X_{3,1} X_{3,2} X_{4,1} X_{4,2} X_{5,1} X_{5,2}$ & the vertical strip of the column 1 and 2 \\
$X_{1,2} X_{1,3} X_{2,2} X_{2,3}  X_{3,2} X_{3,3}  X_{4,2} X_{4,3}  X_{5,2} X_{5,3} $ &  vertical strip of the column 2 and 3 \\ 
$X_{1,3} X_{1,4} X_{2,3} X_{2,4}  X_{3,3} X_{3,4}  X_{4,3} X_{4,4}  X_{5,3} X_{5,4} $ &  vertical strip of the column 3 and 4 \\
$X_{1,4} X_{1,5} X_{2,4} X_{2,5}  X_{3,4} X_{3,5}  X_{4,4} X_{4,5}  X_{5,4} X_{5,5} $ &  vertical strip of the column 4 and 5 \\
 \hline
\end{tabular}
\label{table:appendix6}
\end{table}
\end{center}


\bibliographystyle{unsrt}
\bibliography{apssamp}

\end{document}